\documentclass[10pt,a4paper]{article}
\usepackage[utf8]{inputenc}
\usepackage[english]{babel}
\usepackage{hyperref}
\usepackage{amsmath,amsthm,enumerate}
\usepackage{amssymb}
\usepackage{thm-restate}
\usepackage[capitalize]{cleveref}
\usepackage{tikz}\usetikzlibrary{calc}\usetikzlibrary{positioning}
\usetikzlibrary{shapes}
\usepackage{subfigure}
\usepackage[hmargin=2.8cm,vmargin=3cm]{geometry}
\usepackage[color=green!30]{todonotes}

\newtheorem{theorem}{Theorem}

\newtheorem{lemma}[theorem]{Lemma}

\newtheorem{definition}[theorem]{Definition}

\newtheorem{claim}{Claim}[theorem]

\newtheorem{remark}[theorem]{Remark}

\newcommand{\ID}{\gamma_{ID}}
\newcommand{\tw}{\text{tw}} 
\newcommand{\pw}{\text{pw}} 
\DeclareMathOperator{\nc}{nc}
\newcommand{\loc}{\gamma_{L}} 

\newcommand{\smallqed}{{\tiny ($\Box$)}\medskip}

\usepackage{tikz}\usetikzlibrary{calc}\usetikzlibrary{positioning}\usetikzlibrary{decorations.pathreplacing}
\usetikzlibrary{decorations.pathmorphing,patterns}

\tikzstyle{noeud}=[circle,inner sep=0, minimum size =7 pt, line width = 1pt, draw=black, fill=white]

\tikzstyle{code}=[circle,inner sep=0, minimum size =6 pt, line width = 1pt, draw=black, fill=black]

\tikzstyle{arete}=[line width = 1pt, black]

\begin{document}

\title{Neighbourhood complexity and identification problems for graphs of bounded treewidth and pathwidth\footnote{Florent Foucaud was financed by the French government IDEX-ISITE initiative 16-IDEX-0001 (CAP 20-25), the International Research Center "Innovation Transportation and Production Systems" of the I-SITE CAP 20-25, and by the ANR project GRALMECO (ANR-21-CE48-0004). Tuomo Lehtil\"a's research was supported  by the Research Council of Finland grants 338797 and 358718.}}
\author{Gaétan Berthe\footnote{\noindent Université Clermont Auvergne, CNRS, Clermont Auvergne INP, Mines Saint-Etienne, LIMOS, 63000 Clermont-Ferrand, France.}
\and Florent Foucaud\footnotemark[2]
\and Tuomo Lehtil\"a\footnote{University of Turku, Department of Mathematics and Statistics, Turku, Finland.}
\and Aline Parreau\footnote{\noindent Univ Lyon, CNRS, INSA Lyon, UCBL, Centrale Lyon, Univ Lyon 2, LIRIS, UMR5205, F-69622 Villeurbanne, France.}
}

\maketitle

\begin{abstract}
The neighbourhood complexity $\nc(G,k)$ of a graph $G$ is a quantity measuring, for a graph $G$ and an integer $k$, the maximum possible number (over all vertex subsets $S$ of size $k$) $|\{N[v]\cap S, v\in V(G)\}|$ of $S$-neighbourhoods in $G$. This notion is important in structural graph theory and algorithm design (especially in parameterized complexity, in particular model checking and kernelization).

While generally $\nc(G,k)\leq 2^k$ and this bound can be achieved, it is known that sparse graphs and structured dense graphs have linear neighbourhood complexity, that is, $\nc(G,k)\in O(k)$ for any such graph $G$. However, for many graph classes, the best possible constants are not known. We focus on graphs of bounded treewidth and pathwidth, showing that (when $k\geq w+1$) (i) if $G$ has treewidth $w\geq 2$, then $\nc(G,k)\leq (k-w+1)2^{w}+w$, and (ii) if $G$ has pathwidth $w\geq 2$, then $\nc(G,k)\leq (k-w+2)2^{w-1}+2k-w-2$. Moreover, we provide constructions that reach these bounds, whenever $w\geq 2$ and $k\geq 2w+1$ ($k\geq 2w-1$ for pathwidth).

Interestingly, in contrast, we also have the tight bound $\nc(G,k)\leq \frac{7}{3}k$, for graphs $G$ with pathwidth~1 or treewidth~1.
\end{abstract}

\section{Introduction}\label{sec:intro}

The neighbourhood complexity of a graph is a notion that, informally speaking, quantifies the maximum possible number of neighbourhoods with respect to a given vertex subset. This notion is important in structural graph theory, as it can for example be used to characterize graph classes of bounded expansion~\cite{Reidl19}.

Typically, graphs of bounded VC-dimension (such as geometric intersection graphs and sparse graphs) have polynomial neighbourhood complexity. While sparse graphs (with bounded expansion, planar, excluding a minor...) and structured dense graphs (of bounded clique-width, twin-width, or merge-width) have linear neighbourhood complexity~\cite{BG25,DBLP:journals/ejc/BonnetFLP24}.

Recently, the neighbourhood complexity has gained increased attraction, as it has proved to be a crucial tool for designing kernelization algorithms~\cite{Eickmeyer17,GAJARSKY}, model checking algorithms~\cite{DBLP:conf/focs/DreierEMMPT24} and other parameterized algorithms~\cite{berthe2024subexponential,DBLP:conf/soda/LokshtanovPSXZ22} for structured graph classes, in connection with the study of structured graph classes and their properties~\cite{BG25,DBLP:journals/corr/abs-2601-14906,chudnovsky2026forbiddinganticompleteplanarminors}. In light of these important applications, it is relevant to obtain good (possibly tight) upper bounds on the neighbourhood complexity for significant graph classes.

Let us formally define neighbourhood complexity (we denote by $N[v]$ the \emph{closed neighbourhood} of $v$, i.e., the set of vertices of $G$ at distance at most~$1$ from $v$).

\begin{definition}[Neighbourhood complexity, $\nc$]
Let $G=(V,E)$ be a graph and let $S$ be a set of vertices of $G$.
The {\em neighbourhood complexity of $S$}, denoted by $\nc(G,S)$, is the number of distinct intersections of closed neighbourhoods of vertices of $G$ within $S$:

$$\nc(G,S)=|\{N[u]\cap S, u\in V\}|.$$

Let $k$ be an integer. We denote by $\nc(G,k)$ the maximum neighbourhood complexity over all vertex subsets of size $k$ of $G$:

$$\nc(G,k)=\max_{S\subseteq V, |S|=k} \nc(G,S).$$
\end{definition}

\paragraph{Connection to identification problems.} In the area of identification problems in graphs (and other combinatorial objects such as hypergraphs or discrete geometric structures), for a given combinatorial object, one wishes to select a (small) solution set, say $S$ (for example, $S$ could be a subset of vertices), so that all elements are uniquely identified by their relation to $S$. For example, a \emph{locating-dominating set} of a graph $G$ is a dominating set $S$ of $G$ (a set of vertices such that every vertex not in $S$ has a neighbour in $S$) such that every vertex in $V(G)\setminus S$ has a unique neighbourhood within $S$~\cite{slater1988dominating}. Such a set is an \emph{identifying code} if in fact, every vertex of $G$ has a unique (closed) neighbourhood within $S$~\cite{karpovsky1998new}. In other words, $S$ is an identifying code if and only if $S$ is a dominating set and $\nc(G,S)=n$, the order of $G$. The smallest size of a locating-dominating set of $G$ is denoted by $\loc(G)$, while that of an identifying code of $G$ is denoted by $\ID(G)$; we will informally refer to these numbers as the \emph{identification numbers} of $G$.

These types of problems have been extensively studied in the literature since the 1960s, under various names: see for example the works of Bondy~\cite{B72} or Rényi~\cite{renyi1961}. We refer to the online bibliography maintained at~\cite{onlineBIBidproblems}, and to the book chapter~\cite{lobstein2023chapter}, for more information and references.

It follows from the definitions that the number of vertices of a graph $G$, with a locating-dominating set $S$, is at most $\nc(G,|S|)+|S|$. If furthermore, $S$ is an identifying code, then $G$ actually has exactly $\nc(G,|S|)$ vertices. 
Hence, upper bounds on $\nc(G,k)$ immediately imply upper bounds on the number of vertices of graphs in terms of their identification numbers (by applying them to the special case where $k$ is the identification number). 
Conversely, the proofs upper-bounding the order of graphs in terms of their identification numbers, such as the ones in~\cite{Bertrand20051,bousquet2015identifying,DBLP:journals/fuin/ChakrabortyFPW24,foucaud2013identifying,FMNPV17,Rall1984location,slater1987domination}, can usually be translated into proofs upper-bounding the neighbourhood complexity of such graphs.

\paragraph{Known bounds for various graph classes.} A notion close to that of neighbourhood complexity has been first considered for hypergraphs under the name of \emph{trace function} or \emph{shatter function}~\cite{AMS19,F83}, defined analogously to neighbourhood complexity, but by replacing neighbourhoods by hyperedges. The \emph{VC-dimension} of a graph $G$ is the size of a largest set $S$ such that $\nc(G,S)=2^{|S|}$~\cite{KKRUW97}, and a similar definition holds for hypergraphs, using the trace function instead of the neighbourhood complexity~\cite{vcdim}.

A classic result called the \emph{Sauer-Shelah lemma} implies that for (hyper)graphs of VC-dimension at most $d$, the trace function (neighbourhood complexity, respectively) is in $O(k^d)$~\cite{S72,Sh72}.  

Typically, geometric graphs have bounded VC-dimension. For example, interval graphs have VC-dimension at most~2 and unit disk graphs have VC-dimension at most~3~\cite{bousquet2015identifying}. 
Planar graphs have VC-dimension at most~4, and $K_{t+1}$-minor-free graphs (which include graphs of treewidth at most~$t$) have VC-dimension at most $t$~\cite{DBLP:journals/dm/BousquetT15,DBLP:journals/dcg/ChepoiEV07}; more generally, graphs of bounded degeneracy have bounded VC-dimension~\cite{DBLP:conf/wea/CoudertCDV24}.
Structured dense graphs, such as graphs of bounded rank-width (equivalently, clique-width)~\cite{DBLP:journals/dm/BousquetT15} or bounded MIM-width also have bounded VC-dimension.\footnote{The latter can be shown using similar arguments as in~\cite[Proposition 3.4]{DBLP:journals/tcs/KangKST17}.} Thus, by the Sauer-Shelah lemma, all graphs in the above-mentioned classes have polynomial neighbourhood complexity. 
 Consult~\cite{bousquet2015identifying} for other graph classes of bounded VC-dimension. 

Furthermore, sparse graphs typically have linear neighbourhood complexity. Indeed, for every graph class $\mathcal C$ with bounded expansion, and for every graph $G$ in $\mathcal C$, we have $\nc(G,k)\leq f(\mathcal C)k$ for some function $f$. 
In fact, $\mathcal C$ has bounded expansion if and only if, for every positive integer $r$ and every graph in $\mathcal C$, its \emph{distance $r$-neighbourhood complexity} is at most $f(\mathcal C, r)k$ for some function $f$~\cite{Reidl19} (the distance $r$-neighbourhood complexity can be seen as the neighbourhood complexity of the $r$th-power of the graph). For the more general graph classes $\mathcal C$ that are nowhere dense, we have for some function $f$ and every $\epsilon>0$, for any $G\in \mathcal C$, $\nc(G,k)\leq f(\mathcal C,\epsilon)k^{1+\epsilon}$~\cite{GAJARSKY}. 

Graph classes of bounded expansion include those excluding a fixed subdivision or a fixed minor, for example planar graphs, and graph classes of bounded treewidth. In the case of a planar graph $G$, one can easily adapt the proofs from~\cite{Rall1984location} for locating-dominating sets to show that $\nc(G,k)\leq 7k-9$, and moreover $\nc(G,k)\leq 4k-2$ if $G$ has treewidth~2, and $\nc(G,k)\leq \frac{7k-1}{2}$ if $G$ is outerplanar. One can also easily deduce from bounds on locating-dominating sets of trees~\cite{slater1987domination} that $\nc(G,k)\leq 3k$ if $G$ is a forest.

Graph classes with bounded merge-width, recently introduced in~\cite{DBLP:conf/stoc/DreierT25}, that form a very nice and general notion of structured dense graphs (or "structurally sparse graphs"), also have linear neighbourhood complexity~\cite{BG25}. Such classes include graph classes of bounded expansion, graph classes of bounded twin-width, clique-width, etc. The case of graphs of bounded twin-width was studied in more detail in~\cite{DBLP:journals/ejc/BonnetFLP24}, where asymptotically tight bounds were provided.

\paragraph{Treewidth and pathwidth.} Treewidth and its refinement pathwidth, in some general sense, measure how close the structure of a graph is to a tree or a path. For formal definitions, we refer to Section~\ref{sec:prelims}. These graph parameters are among the most successful in structural graph theory, notably since the grand Graph Minor project by Robertson and Seymour~\cite{DBLP:journals/jct/RobertsonS04}, in which treewidth and pathwidth play an essential role. They are also very important for graph algorithms, notably due to the immense success of Courcelle's ``algorithmic meta-theorem'' on Monadic Second Order Logic model checking for graphs of bounded treewidth~\cite{DBLP:journals/iandc/Courcelle90}. See Chapter~7 in the book~\cite{Cygan15} for details on several algorithmic applications of treewidth, and~\cite{DBLP:conf/wg/Bodlaender06} for a 2006 survey on the topic. More refined results exist for pathwidth, see for example~\cite{DBLP:conf/icalp/Lampis23}.

Graphs of treewidth at most~$w$ are sparse graphs, as they do not contain large complete graphs as minors and thus have bounded expansion. Hence, it follows that they have linear neighbourhood complexity~\cite{Reidl19}: at most $f(w)k$ for some function $f$. There has already been earlier work focused on the neighbourhood complexity of graphs of given treewidth~\cite{beaudou2025profileneighbourhoodcomplexitygraphs,JR}, however, it considered the distance $r$-neighbourhood complexity of graphs of treewidth~$w$, trying to determine the optimal asymptotic behaviour in terms of $r$ and $w$. The current state of the art is an upper bound of $2^{w+3}(w+1)^{w+1}(r+1)^{w+1}k$~\cite{beaudou2025profileneighbourhoodcomplexitygraphs} which gives for $r=1$ an upper bound of $2^{2w+4}(w+1)^{w+1}k$. A construction witnessing $r$-neighbourhood complexity at least $\Omega(w^{-w}r^{w}k)$ is shown in~\cite{JR}, however note that it requires $r\geq2^w(w+1)$. 
The optimal growth rate in terms of $w$ and $r$ is still open, as noted  in~\cite{beaudou2025profileneighbourhoodcomplexitygraphs,JR}.

When $r=1$, one can deduce a double-exponential upper bound of the form $2^{2^{O(w)}}k$ for the neighbourhood complexity of graphs of treewidth~$w$, by using the known relation that the treewidth is at most exponential in twin-width~\cite{DBLP:conf/wg/JacobP22} and the exponential upper bound for the neighbourhood complexity of graphs of bounded twin-width~\cite{DBLP:journals/ejc/BonnetFLP24}. However, as we will see, this upper bound is far from tight.

\paragraph{Our results.} We determine the best possible bounds on the (distance~1) neighbourhood complexities of graphs of given treewidth and pathwidth.

We start by proving an optimal upper bound on the neighbourhood complexity of graphs of treewidth~1 and pathwidth~1, which turns out to be the same bound. The method extends a proof from~\cite{Bertrand20051} for identifying codes of trees.
\begin{restatable}{theorem}{NCTrees}
\label{thm:NC-trees} 
For any forest $F$ of order $n\geq 3$ and integer $k$, $\nc(F,k)\leq \left\lfloor\frac{7}{3}k\right\rfloor$, and the bound is tight, even for caterpillars\footnote{A \emph{caterpillar} is a tree where all its vertices of degree at least~2 lie on a single path.} of maximum degree~3 (which have pathwidth~1).
\end{restatable}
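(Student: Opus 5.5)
We reduce the upper bound to a charging argument in the spirit of the proof for identifying codes of trees~\cite{Bertrand20051}. Fix a forest $F$ and a set $S$ with $|S|=k$. Each distinct trace $N[v]\cap S$ is realised by some vertex; pick one representative per trace and let $X$ be the set of representatives. The empty trace contributes at most $1$. Reduction step: we may assume every vertex of $X$ is either in $S$ or adjacent to $S$, and that $F[S\cup X]$ is the whole forest. Deleting vertices outside $S\cup X$ can only keep traces the same, since traces only depend on $N[v]\cap S$. So it suffices to bound $|X|$ for a forest in which every vertex is dominated by $S$ (plus one for the empty trace). Pendant-vertex and twin handling then lets us assume $\nc(F,S)=|V(F)|+\varepsilon$ with $\varepsilon\in\{0,1\}$, so that $S$ behaves like a code separating all non-$S$-empty vertices.

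Charging step. For each non-empty trace $T$:
\begin{itemize}
\item if $|T|=1$, charge it to its unique element $s\in S$;
\item if $|T|\ge 2$, charge it to the edges/paths of the forest spanned by $T$.
\end{itemize}
Since $F$ is a forest, the traces of size at least $2$ are realised by distinct vertices $v$ with at least two $S$-neighbours, or by $v\in S$ with an $S$-neighbour. Consider the auxiliary forest $H$ on $S$ where two vertices of $S$ are joined if they are adjacent or share a common neighbour. In $H$, each vertex $v\notin S$ with $|N(v)\cap S|\ge2$ gives a star, and acyclicity of $F$ yields that these stars together with the $S$-edges form a forest structure. Hence the number of traces of size $\ge2$ is at most $k-c$, where $c$ is the number of components. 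Traces of size $1$ number at most $2$ per vertex of $S$ (namely $s$ itself if $N[s]\cap S=\{s\}$, and private neighbours). A naive sum gives $3k$, recovering the known forest bound.

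To improve to $\frac73 k$, we do a local discharging. Give each $s\in S$ initial weight equal to the number of traces charged to it, and show that after redistribution each $s$ holds at most $7/3$. The key observation is that a vertex $s$ receiving $3$ charges must have a private leaf-type neighbour $u\notin S$ with trace $\{s\}$, must have trace $\{s\}$ itself or an analogous configuration, and must be incident to the forest structure of size-$\ge2$ traces. Uniqueness of traces then forces nearby $S$-vertices of low charge, such as a neighbour $s'$ whose trace set is shared and must thus be "cheap". We then send $1/3$ from each overloaded vertex to such neighbours. Since components must be treated separately, the additive slack from components ($-c$) and the floor are used to absorb boundary effects; the empty trace absorbs the $+1$.

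The main obstacle is this case analysis showing every weight-$3$ vertex finds enough weight-$\le2$ partners, each partner being used by at most the right number of overloaded vertices. I would first try rooting each tree of $H$ and sending charge from parent to child along $H$-edges, checking small configurations exhaustively. For tightness, I would build a caterpillar of maximum degree~3 from repeated gadgets with $3$ code vertices on the spine producing $7$ distinct traces. For example, take spine vertices $s_1,a,s_2,s_3$ with $s_1,s_2,s_3$ in $S$ and each $s_i$ given a leaf, then verify that the traces are distinct across concatenated copies. The floor is handled by truncating the last gadget.
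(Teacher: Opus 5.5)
Your upper bound has a genuine gap. The step from $3k$ to $\frac{7}{3}k$ is left to a discharging scheme whose case analysis you yourself call the open obstacle. Moreover, the base count it is meant to refine is wrong in two places.

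\begin{enumerate}
\item Traces are sets, so there is only one possible trace of size~$1$ per $s\in S$, namely $\{s\}$. Your ``two per vertex of $S$'' counts vertices, not traces; the correct bound is $k$.
\item The claim that there are at most $k-c$ traces of size at least~$2$ is false. A path $s_1s_2s_3$ with $s_1,s_2,s_3\in S$ already gives the three traces $\{s_1,s_2\}$, $\{s_1,s_2,s_3\}$, $\{s_2,s_3\}$, while $H$ has only two edges there.
\end{enumerate}

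Your auxiliary graph $H$ is a forest if you take one star per vertex $x\notin S$ with at least two $S$-neighbours; joining all pairs with a common neighbour would create triangles. What fails is the passage from edges of $H$ to traces: a component of $F[S]$ with $m\ge 3$ vertices can produce $m$ distinct traces but only $m-1$ edges. The reduction to $\nc(F,S)=|V(F)|+\varepsilon$ is also unjustified. Vertices of $S$ cannot be deleted and may share traces, e.g.\ the two vertices of a two-vertex component of $F[S]$. Fortunately that reduction is not needed.

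The missing idea is to count per component of $F[S]$, after which no discharging is needed; this is the paper's argument. Let $S_{2^+}$ be the vertices of $S$ with a neighbour in $S$. Let $c_2$, $c_{\ge 2}$, $c_{\ge 3}$ be the numbers of components of $F[S]$ of order exactly $2$, at least $2$, and at least $3$.
\begin{itemize}
\item Each $x\notin S$ with at least two $S$-neighbours owns at least one edge of $H$ outside $F[S]$, and distinct $x$ own distinct edges. Since $F[S]$ has $|S_{2^+}|-c_{\ge 2}$ edges, there are at most $k-c-|S_{2^+}|+c_{\ge 2}$ such $x$.
\item The vertices of $S_{2^+}$ give at most $|S_{2^+}|-c_2$ traces, because both vertices of a two-vertex component have the same trace.
\item Hence there are at most $k-c+c_{\ge 3}$ traces of size at least $2$. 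As $c\ge 1$ and $c_{\ge 3}\le k/3$, the total is at most $k+(k-c+c_{\ge 3})+1\le \frac{7}{3}k$.
\end{itemize}
The paper phrases this as a forest on the set $\mathcal C_S$ of components of $F[S]$, giving at most $|\mathcal C_S|-1$ such $x$.

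Your tightness gadget also falls one short. Chaining copies of $s_1,a,s_2,s_3$ by edges $s_3s_1'$ gives $7$ traces per interior copy. But the first $s_1$ has the same trace $\{s_1\}$ as its leaf, and the last $s_2,s_3$ share $\{s_2,s_3\}$. So for $k=3q$ you reach only $7q-1$ traces even counting the empty one; joining copies through an extra vertex outside $S$, or truncating, is worse. Regroup so that every maximal run of $S$-vertices on the spine has at least three vertices:
\begin{itemize}
\item use paths of three $S$-vertices, or one path of $3+r$ when $k=3q+r$, each $S$-vertex with a pendant leaf;
\item join these paths consecutively through single vertices outside $S$;
\item add one pendant vertex attached to a vertex outside $S$ for the empty trace.
\end{itemize}
This gives $2k+(q-1)+1=\lfloor\frac{7}{3}k\rfloor$.
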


We then proceed to prove our two main results, for the case $\tw(G)\geq 2$, as follows. In contrast with the case $w=1$, the two bounds are different. We note that we improve the best previously known upper bound by roughly a factor of $2^{w+4}(w+1)^{w+1}$.

\begin{restatable}{theorem}{TWUB}\label{thm:TW-UB}
Let $G$ be a graph of treewidth $w\geq2$ and $k\geq w+1$ an integer. Then, $\nc(G,k)\leq (k-w+1)2^{w}+w$.
\end{restatable}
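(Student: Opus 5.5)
We prove the bound by induction on $k$, using a tree decomposition of $G$ to peel off a small part of $S$ at a time. Fix $S$ with $|S|=k$. Since treewidth does not increase under taking subgraphs, we may delete all vertices not in $S$ whose trace $N[v]\cap S$ is also realised by another vertex. We may also assume every trace is realised by a vertex of $S$ or by a unique vertex outside $S$. Hence $G$ consists of $S$ together with a set $X$ of vertices, one per remaining trace. We then take a nice tree decomposition of width $w$ and root it.

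The base case is $k=w+1$. The claimed value is $2^{w+1}-1+\dots$; more precisely, $(k-w+1)2^w+w = 2^{w+1}+w$. Since all $2^{w+1}$ subsets of $S$ could a priori appear, we must show that some subset is missing, i.e.\ that $\nc(G,S)\le 2^{w+1}+w$. Note that $2^{w+1}+w$ exceeds $2^{w+1}$, so the bound holds trivially for $k=w+1$. The base case is therefore immediate, and the content lies in the inductive step: each additional vertex of $S$ contributes at most $2^w$ new traces.

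For the inductive step, choose a deepest bag $B$ in the rooted decomposition that contains a vertex $s\in S$ appearing in no bag above $B$'s subtree. More concretely, pick $s\in S$ whose forget node is lowest, so that the subtree below the forget node contains no other forgotten vertex of $S$. Let $T_s$ be the set of vertices of $G$ appearing only in that subtree. Their neighbours lie in $T_s\cup B'$, where $B'=B\setminus\{s\}$ is the separator, with $|B'|\le w$. Every vertex of $S\cap T_s$ other than $s$ would have been forgotten lower down, so $S\cap T_s=\{s\}$. Now consider $S'=S\setminus\{s\}$ with $|S'|=k-1$, and apply the induction hypothesis to $G$ (or to $G-s$). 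Two vertices with distinct traces on $S$ but equal traces on $S'$ differ only in adjacency to $s$. The traces lost in this merging thus number at most the number of traces on $S'$ that contain both $s$ and some pair of vertices. Any vertex adjacent to or equal to $s$ lies in $T_s\cup B'$, so its trace on $S'$ is a subset of $B'\cap S'$, since $T_s$ contains no other $S$-vertex. There are at most $2^{|B'\cap S|}\le 2^w$ such subsets. Hence
\[
\nc(G,S)\le \nc(G,S')+2^w\le (k-w)2^w+w+2^w,
\]
which is the desired bound.

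The main obstacle is guaranteeing that the separator $B'$ has at most $w$ vertices. It must also be that $k-1\ge w+1$ when we invoke the hypothesis; when $|B'\cap S|=w$ exactly, we need the additive $+w$ to absorb the boundary. Justifying that a suitable $s$ exists requires the forget-node argument: bags have size at most $w+1$, and at the forget node of $s$ the bag still contains $s$, so the other bag vertices number at most $w$. This settles the count of $2^w$ per step. The case $k=w+1$ reduces to the trivial estimate $2^{w+1}\le 2^{w+1}+w$, and the $w\ge2$ hypothesis plays no role in the upper bound itself beyond matching the paper's statement.
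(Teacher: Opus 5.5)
Your base case is fine, but the inductive step rests on a false claim. You correctly bound the number of merged traces by the number of distinct $S'$-traces of vertices of $N[s]$. You then assert that each such trace lies in $B'\cap S'$. That holds for vertices of $T_s$, but not for vertices of the separator $B'$ adjacent to $s$. Such a vertex $b$ also lives in bags above the forget node, so it can have neighbours in $S'$ far outside $B'$, and $N[b]\cap S'$ may coincide with some other trace. This adds up to $|N(s)\cap B'|$ extra merges at every step. The additive $+w$ is a one-time constant and cannot absorb an excess that recurs at each of the $k-w-1$ steps.

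This is not only a gap in the justification: the inequality $\nc(G,S)\le \nc(G,S\setminus\{s\})+2^w$ is false in general, for every choice of $s$. Take the paper's graph $G_{2,5}$: cliques $\{s_1,s_2,s_3\},\{s_2,s_3,s_4\},\{s_3,s_4,s_5\}$ plus the independent vertices $v_X$, with $\nc(G_{2,5},S)=18$. Rooted at the $\{s_1,s_2,s_3\}$ end, your rule picks $s=s_5$ with $B'=\{s_3,s_4\}$. Deleting $s_5$ merges the four pairs you account for: $\{s_5\}/\emptyset$, $\{s_3,s_5\}/\{s_3\}$, $\{s_4,s_5\}/\{s_4\}$ and $\{s_3,s_4,s_5\}/\{s_3,s_4\}$. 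It also merges $N[s_3]\cap S=S$ with $N[s_2]\cap S=\{s_1,\dots,s_4\}$, although $s_3\in B'$. A direct check gives $\nc(G_{2,5},S\setminus\{s_i\})=13,12,11,12,13$ for $i=1,\dots,5$. So $18>\nc(G_{2,5},S\setminus\{s\})+4$ for every $s$, and no smarter choice of $s$ rescues the induction.

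There is also a quicker sanity check. Your argument never uses $w\ge 2$, as you note yourself. For $w=1$ it would therefore give $\nc(F,k)\le 2k+1$ for forests, contradicting the tight bound $\lfloor 7k/3\rfloor$ of Theorem~\ref{thm:NC-trees} when $k\ge 6$.

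The missing idea is to count vertices of an extremal graph rather than compare $\nc(G,S)$ with $\nc(G,S\setminus\{s\})$. The paper chooses $G$ maximizing $\nc$, then minimizing vertices and edges. This makes $V(G)\setminus S$ an independent set with pairwise distinct traces, and after removing the unique empty-trace vertex, $\nc(G,S)\le |V(G)|+1$. Each vertex of $S$ then costs exactly $1$, however spread out its own trace is, which is precisely what your merge count cannot control.

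The paper then normalizes a rooted decomposition: distinct $S$-root nodes, $Z=S$, and Claims~\ref{claim:V1S} and~\ref{claim:xadjs}, where $w\ge 2$ is needed. After this, every non-$S$ vertex $x$ is charged to a unique $s$ with $s\in N(x)\subseteq X(r_s)$. With Lemma~\ref{lem:noneighbourinfullbag}, this gives at most $2^w-1$ such vertices per $s$. The smaller bags of the $S$-root nodes on the root path contribute only $2^w-1$ in total, which yields $|V(G)|-k\le (k-w+1)(2^w-1)$ and hence the bound.
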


Interestingly, for pathwidth, we can reduce the dependency on $w$ by essentially a factor of~2, as follows.

\begin{restatable}{theorem}{PWUB}\label{thm:PW-UB}
Let $G$ be a graph of pathwidth $w\geq 2$ and $k\geq w+1$ an integer. Then, $\nc(G,k)\leq (k-w+2)2^{w-1}+2k-w-2$.
\end{restatable}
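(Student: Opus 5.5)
The plan is to use a nice path decomposition and count neighbourhood traces by charging them to the moments when vertices of $S$ enter the bags, mirroring the treewidth argument but exploiting the linear order. Fix $S$ with $|S|=k$ and a path decomposition $(B_1,\dots,B_m)$ of width $w$, i.e.\ bags of size at most $w+1$. Each trace $N[u]\cap S$ is a subset of $S$ contained in a set of the form $N[u]\cap S\subseteq \bigcup_{i\in I_u} B_i\cap S$, where $I_u$ is the interval of bags containing $u$. Two standard reductions first: we may assume $G$ is edge-maximal of pathwidth $w$ (an interval supergraph with clique bags) for the purposes of bounding the number of vertices $u$ with distinct traces. More precisely, we group the traces by their vertex $u$. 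If $u\in S$, there are at most $k$ traces. If $u\notin S$ and $|N[u]\cap S|\le 1$, there are at most $k+1$ traces (empty set and singletons). The main count concerns vertices $u\notin S$ with at least two $S$-neighbours.

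For such a $u$, all its $S$-neighbours lie in bags along $I_u$. I would order the vertices of $S$ by the index of the first bag in which they appear, $s_1,\dots,s_k$, and assign to each $u$ the \emph{last} $S$-vertex $s_j$ (in this order) that it sees. The key claim is that $N[u]\cap S\setminus\{s_j\}$ is contained in the set of $S$-vertices alive in the bag where $s_j$ is introduced (since $u$ must be alive there, and every earlier $S$-neighbour of $u$ must still intersect $I_u$ at that point). That bag contains $s_j$, $u$ itself (which is not in $S$), and at most $w-1$ other vertices, so the trace is $\{s_j\}\cup X$ with $X$ ranging over subsets of at most $w-1$ vertices: at most $2^{w-1}$ options per $j$. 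This is the source of the halved dependence compared to Theorem~\ref{thm:TW-UB}, where in a tree a vertex may need $w$ other bag-mates. For $j\le w-1$ or so, the set of alive earlier $S$-vertices is smaller (at most $j-1$), giving only $2^{j-1}$ options, which accounts for the correction terms: summing $\sum_{j\le w-1}2^{j-1}+(k-w+1)2^{w-1}$ and adding the contributions of small traces and of $S$-vertices should produce $(k-w+2)2^{w-1}+2k-w-2$ after careful bookkeeping of overlaps (e.g.\ singleton traces $\{s_j\}$ are already counted in the $X=\emptyset$ case, and traces of $S$-vertices can coincide with ones counted).

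The main obstacle is making this bookkeeping exact rather than off by a linear term: one must argue precisely which traces are double counted, handle the case where $u$'s bag at the introduction of $s_j$ is full of $S$-vertices (then $u$ cannot be present, giving a saving), and check the contributions of $S$-vertices themselves, whose traces $N[s]\cap S$ are constrained similarly. I would first prove the cruder bound $k+1+k+\sum_j 2^{\min(j-1,w-1)}$ and then refine by showing that, for each $j$, either the bag introducing $s_j$ contains a non-$S$ vertex (so only $w-1$ other slots) or it contains only $S$-vertices, in which case no new non-$S$ trace is charged, and compare these savings against the $2k-w-2$ term.
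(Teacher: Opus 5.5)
\textbf{There is a genuine gap.} Your key claim is false, and it fails in exactly the way that produces the linear term of the bound. Let $s_j$ be the last $S$-neighbour of $u\notin S$, and let $X(p_{s_j})$ be the bag where $s_j$ first appears. An earlier $S$-neighbour $s_i$ of $u$ need not lie in $X(p_{s_j})$: $u$ can meet $s_i$, stay in the decomposition after $s_i$ is forgotten, and only then meet $s_j$.

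The tight construction of Theorem~\ref{thm:PW-tight} does exactly this. For $w=2$, the vertex $z_i$ has trace $\{s_i,s_{i+2}\}$ and lives in the bags $\{s_i,s_{i+1},z_i\}$ and $\{s_{i+1},s_{i+2},z_i\}$; the latter is where $s_{i+2}$ first appears. So for each $3\le j\le k-1$ there are two traces of size at least $2$ with maximum $s_j$, namely $\{s_{j-1},s_j\}$ and $\{s_{j-2},s_j\}$, whereas your claim allows only $2^{w-1}-1=1$.

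In general, once singletons are not double counted, your per-index count would give $\nc(G,k)\le (k-w+2)2^{w-1}+k$. This is strictly below the value $(k-w+2)2^{w-1}+2k-w-2$ attained by that construction for all $k\ge\max(2w-1,w+3)$, so no refinement of the bookkeeping can rescue the argument. Your final dichotomy fails for the same reason, in both branches. A non-$S$ vertex sitting in $X(p_{s_j})$ may carry a trace not contained in that bag. And if $X(p_{s_j})\subseteq S$, vertices introduced later can still produce up to $2^{w-1}$ traces charged to $j$.

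Separately, the ``edge-maximal'' reduction is not legitimate, since adding edges changes traces. The paper goes the other way: it takes an extremal graph with a minimum number of vertices and then edges. This gives no edges outside $S$ and pairwise distinct traces for the non-$S$ vertices.

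\textbf{The missing idea} is to split the vertices charged to index $j$ into those already present in $X(p_{s_j})$ and those appearing strictly later.
\begin{itemize}
\item The former number at most $n_j=|X(p_{s_j})\setminus S|$ and are counted one each, whatever their trace.
\item The latter really do have all their $S$-neighbours in $X(p_{s_j})\cap S$: each earlier neighbour was introduced before $p_{s_j}$ and is alive at a later bag containing $u$. This set contains $s_j$ and has size at most $w+1-n_j$. When $n_j=0$ and the bag is full, its first-forgotten vertex cannot be adjacent to these later vertices.
\end{itemize}
This yields at most $n_j+2^{\min(w-1,w-n_j)}\le 2^{w-1}+1$ traces per index. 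The extra $+1$, realised by the $z_i$, is the real source of the term $2k-w-2$.

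The paper implements this by charging each $u\notin S$ to the index of the last bag containing it: these are the sets $V^*_i$, with $|V^*_i|\le n_i+2^{\min(w-1,w-n_i)}$. It first normalises the decomposition so that $X(p_{s_1})=\{s_1\}$, the first-appearance nodes are distinct, and every vertex of $V_i$ is adjacent to $s_i$. Your last-neighbour charging would also work once this split is made. Reaching the exact constant still requires two separate savings of one each: at index $w+1$ (giving $3\cdot 2^{w-1}-1$ for the first $w+1$ indices) and at the last index $k$ (giving $2^{w-1}$).
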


We also show that both bounds of Theorem~\ref{thm:TW-UB} and Theorem~\ref{thm:PW-UB} are tight for any $w\geq 2$ and $k\geq 2w+1$ ($k\geq 2w-1$ for pathwidth), by providing adequate constructions.

\paragraph{Outline.} We start with some preliminaries in Section~\ref{sec:prelims}. We focus on graphs of treewidth~1 and pathwidth~1 in Section~\ref{sec:trees}. We then show our upper bounds for $w\geq 2$ in Section~\ref{sec:UB}, and the matching lower bounds in Section~\ref{sec:LB}. We state implications for identification numbers in Section~\ref{sec:ID}, and conclude in Section~\ref{sec:conclu}.

\section{Preliminaries}\label{sec:prelims}
\paragraph{General notations}
Let $G$ be a graph and $S\subseteq V(G)$. The subgraph induced by $G$ on $S$ is denoted $G[S]$. We denote by $N[v]$ the \emph{closed neighbourhood} of $v$, i.e., the set of vertices of $G$ at distance at most~$1$ from $v$. For $v\in V(G)$ we say that $N[v]\cap S$ is the $S$-neighbourhood of $v$, the neighbourhood complexity $\nc(G,S)$ is then defined as the the number of distinct $S$-neighbourhoods. Two vertices $x,y$ of $G$ are said to be \emph{separated} by a vertex $s$ of $S$ (or, by extension, separated by $S$) if $s$ is adjacent to exactly one of $x$ and $y$, implying that $x,y$ have distinct $S$-neighbourhoods.

\paragraph{Tree-decomposition and widths}
A \emph{tree-decomposition} of a graph $G$ is a pair $(T,X)$, where $T$ is a tree and $X: V(T) \rightarrow 2^{V(G)}$ is a mapping from $V(T)$ (called \emph{nodes}) to subsets of $V(G)$ (called bags) such that:

1. for all $uv \in E(G)$, there exists a node $t\in V(T)$ such that $\{u,v\} \subseteq X(t)$;
    
2. for all $v \in V(G)$, the subgraph of $T$ induced by $T_v = \{ t \in V(T)~|~v \in X(t)\}$ is a non-empty tree.
\noindent The \emph{width} of a tree-decomposition $(T,X)$ is $\max_{t\in V(T)} |X(t)| - 1$. The \emph{treewidth} of $G$, denoted by $\tw(G)$, is the minimum possible width of a tree-decomposition of $G$.

A tree-decomposition $(T,X)$ of a graph $G$ where the tree $T$ is a path is called a \emph{path-decomposition}. The \emph{pathwidth} of $G$, denoted by $\pw(G)$, is the minimum possible width of a path-decomposition of $G$.

We will need the following lemma about tree-decompositions:

\begin{lemma}\label{lem:noneighbourinfullbag}
Let $G$ be a graph and $(T, X)$ be a tree-decomposition of $G$ of width 
$w$. Let $t$ be a node with bag $X(t)$ of size $w+1$. Then no vertex of $G$ outside of $X(t)$ can be adjacent to the full bag $X(t)$.
\end{lemma}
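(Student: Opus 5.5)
Suppose, for contradiction, that some vertex $v\notin X(t)$ is adjacent to every vertex of $X(t)$. The plan is to locate a single bag that contains $v$ together with all of $X(t)$. That bag would then have at least $w+2$ elements, which contradicts the width being $w$.

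First, since $v\notin X(t)$, the subtree $T_v$ does not contain $t$. Because $T_v$ is a nonempty connected subtree of $T$, there is a unique node $t'\in T_v$ closest to $t$ in $T$. Let $t''$ be the neighbour of $t'$ on the path from $t'$ to $t$; possibly $t''=t$.

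Next, fix an arbitrary $u\in X(t)$. The edge $uv$ is covered by some bag $X(s)$, so $s\in T_u\cap T_v$. The path in $T$ from $s$ to $t$ must pass through $t'$: since $s\in T_v$ and $t\notin T_v$, the path leaves $T_v$, and by the choice of $t'$ as the gateway of $T_v$ toward $t$, it exits through $t'$. More precisely, in a tree the path from any node of the subtree $T_v$ to $t$ passes through the node of $T_v$ nearest $t$. Both endpoints $s$ and $t$ lie in $T_u$, and $T_u$ is connected, so the whole path lies in $T_u$. In particular $u\in X(t')$.

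Since $u$ was arbitrary, $X(t)\cup\{v\}\subseteq X(t')$. Hence $|X(t')|\ge w+2$, contradicting the width being $w$.

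The only mildly delicate step is the tree fact that every path from a vertex of the subtree $T_v$ to an outside node $t$ passes through the node $t'$ of $T_v$ closest to $t$. This follows from the uniqueness of paths in trees together with the connectivity of $T_v$. Concretely, the $s$–$t$ path together with the $s$–$t'$ path inside $T_v$ and the $t'$–$t$ path would otherwise create a cycle, or else would give a node of $T_v$ closer to $t$ than $t'$.
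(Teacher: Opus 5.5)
Your proof is correct. It reaches the same contradiction as the paper, a bag containing $X(t)$ together with the outside vertex and hence of size $w+2$, but by a more explicit route.

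The paper calls the outside vertex $x$. It picks two vertices $u,v\in X(t)$ and nodes whose bags contain $\{x,u\}$ but not $v$, respectively $\{x,v\}$ but not $u$. It then asserts that the subtrees $T_x,T_u,T_v$ cannot all be connected. You instead name the offending bag directly: the node $t'$ of $T_v$ closest to $t$. Every $T_u$ with $u\in X(t)$ must pass through $t'$, because it contains both $t$ and a node of $T_v$.

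Your version is more rigorous. The paper's sketch only ever looks at three subtrees, and three pairwise-intersecting subtrees of a tree always share a node; a bag $\{x,u,v\}$ is harmless when $w\geq 2$. So the contradiction really needs a Helly-type statement for all $w+2$ subtrees $T_x$ and $T_u$, $u\in X(t)$, at once. Your argument is exactly a self-contained proof of that special case, where the $T_u$ already share the node $t$. The paper's version is shorter but leaves that step implicit.

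Minor point: the node $t''$ you introduce is never used and can be dropped.
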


\begin{proof}
Indeed, suppose to the contrary that there exists a vertex $x$ adjacent in $G$ to the $w+1$ vertices of $X(t)$ and let $x\not\in X(t)$. Then for some two $u,v\in X(t)$ there exist nodes $t'$ and $t''$ with bags containing exactly one of the pairs $\{x,u\}$ and $\{x,v\}$. However, now it is not possible that vertex $x$ simultaneously belongs to bags of connected nodes, vertex $u$ belongs to bags of connected nodes and vertex $v$ belongs to bags of connected nodes, a contradiction.
\end{proof}

\section{Forests}\label{sec:trees}

In this section, we consider the neighbourhood complexity when the underlying graph $G$ has $\tw(G)=1$, namely forests. As Theorem~\ref{thm:TW-UB} requires $\tw(G)=w\geq2$, we have separated this case as its own section, and in fact the best possible upper bound for $w=1$ is a different one.

It is not difficult to show that $\nc(T,k)\in O(k)$ for any tree $T$: for example, using the ideas for locating-dominating sets from~\cite{slater1987domination}, one can easily show that $\nc(T,k)\leq 3k$. For identifying codes, the bound $\frac{7}{3}k+1$ was shown in~\cite{Bertrand20051}, but the bound cannot be immediately translated to neighbourhood complexity. Nevertheless, we next show that one can obtain essentially the same bound for the neighbourhood complexity of trees, and that it is tight (even for trees of pathwidth~1).

\NCTrees*
\begin{proof} Let $S$ be a vertex subset of $F$ of size $k$, let $V_1$ be the set of vertices of $F$ that have exactly one vertex of $S$ in their closed neighbourhood, and $V_{2^+}$ the vertices of $F$ that have at least two vertices of $S$ in their closed neighbourhood. Let $S_1=V_1\cap S$, $S_{2^+}=V_{2^+}\cap S$, and $N_{2^+}=V_{2^+}\setminus S_{2^+}$. 

Let $\mathcal C_S$ be the set of connected components of the subforest $F[S]$ induced by $S$. We create an auxiliary graph $H$ with vertex set  $\mathcal C_S$ as follows. For each vertex $x$ in $N_{2^+}$, we add exactly one edge in the following way. Since $F$ is a forest, $x$ is adjacent to at least two distinct connected components in $F[S]$ (if $x$ had two neighbours in the same connected component of $F[S]$, it would form a cycle, a contradiction). We choose two of these connected components 
and add an edge between the corresponding vertices in $H$. Since $F$ is a forest, the edges of $H$ are all distinct (otherwise we would again get a cycle), hence $H$ has exactly $|N_{2^+}|$ edges. Moreover, $H$ can be obtained from $F$ by deleting vertices/edges and contracting edges, and thus $H$ is a forest. 

Let $c_2$ be the number of connected components in $\mathcal C_S$ that have exactly two vertices. Then, $|\mathcal C_S|\leq |S_1| + c_2 + \frac{|S_{2^+}|-2c_2}{3}$. As $H$ is a forest on $|\mathcal C_S|$ vertices, it has at most $|\mathcal C_S|-1$ edges, and hence, $|N_{2^+}|\leq |\mathcal C_S|-1\leq |S_1| + c_2 + \frac{|S_{2^+}|-2c_2}{3}-1$.

There are at most $|S|$ distinct $S$-neighbourhoods among vertices of $V_1$. There are at most $|S_{2^+}|-c_2$ distinct $S$-neighbourhoods among vertices of $S_{2^+}$, since the two vertices in any connected component in $\mathcal C_S$ of size~2  have the same $S$-neighbourhood. Vertices in $V(F)\setminus (V_1\cup V_{2+})$ account for one (empty) $S$-neighbourhood. Altogether, we obtain:

\begin{align*}
    \nc(F,S) & \leq |S|+|S_{2^+}|-c_2+|N_{2^+}|+1\\
    & \leq |S|+|S_{2^+}|-c_2+|S_1| + c_2 + \frac{|S_{2^+}|-2c_2}{3}-1+1\\
    & \leq k+k+\frac{k}{3}\\
    & = \frac{7k}{3}.
\end{align*}

Finally, by observing that $\nc(F,S)$ is an integer, we then obtain the wanted result $\nc(T,S)\leq \left\lfloor\frac{7k}{3}\right\rfloor$.

To see that the bound is tight, we can consider a construction from~\cite{Bertrand20051}. Let $k=3q+r$ with $0\leq r < 3$, and for any integer $i$, let $T_i$ be the tree on $2i$ vertices obtained by starting with a path on $i$ vertices, and by attaching a distinct leaf to each of these $i$ vertices (see Figure~\ref{fig:tighttreePi}). In a first construction step, consider $(q-1)$ copies of $T_3$, and a copy of $T_{3+r}$. Let $S$ be the set containing the $3$ vertices in the initial paths each of the $(q-1)$ copies of $T_3$, and the $3+r$ vertices of the original path of the $T_{3+r}$ copy. The number of vertices in $S$ is then $3(q-1)+3+r=k$. For the second step of the construction, consider any tree $T'$ on $q$ vertices, with each of these vertices representing one of the $q$ trees obtained in the first step of the construction. Then for every edge $e$ on $T'$, we add to the construction a path on two edges 
between two $S$-vertices of the two copies of $T_3$ or $T_{3+r}$ corresponding to the endpoints of $e$. Finally, attach a single vertex to any vertex of the obtained tree that is not in $S$. See Figure~\ref{fig:tighttreeID-explicit} for an illustration, where $T'$ is a star with three leaves.

It is not difficult to see that every vertex has a distinct $S$-neighbourhood, and the total number of vertices is $2k+(q-1)+1=7q+2r=\lfloor\frac{7k}{3}\rfloor$.

Finally, notice that if $T'$ is a path and the final vertex with empty $S$-neighbourhood is attached to a vertex of degree~2, then the constructed tree is a caterpillar of maximum degree~3.
\end{proof}

\begin{figure}[!ht]
    \centering
    \begin{tikzpicture}
    
    \node[noeud](n0) at (0,0) {};
    \node[noeud](n1) at (1,0) {};
    \node[noeud](n2) at (2,0) {};
    
    \node[code](c0) at (0,1) {};
    \node[code](c1) at (1,1) {}; 
    \node[code](c2) at (2,1) {};
     
     \foreach \I in {0,1,2}
        \draw[arete] (c\I) -- (n\I);
    
    \draw[arete] (c0)--(c1)--(c2);
    
    \node at (1,-0.5) {$T_3$};
    
    \begin{scope}[shift={(4,0)}]
       \node[noeud](n0) at (0,0) {};
    \node[noeud](n1) at (1,0) {};
    \node[noeud](n2) at (2,0) {};
  \node[noeud](n3) at (3,0) {};
    
    \node[code](c0) at (0,1) {};
    \node[code](c1) at (1,1) {}; 
    \node[code](c2) at (2,1) {};
      \node[code](c3) at (3,1) {};
      
     \foreach \I in {0,1,2,3}
        \draw[arete] (c\I) -- (n\I);
    
    \draw[arete] (c0)--(c1)--(c2)--(c3);
    
    \node at (1.5,-0.5) {$T_4$};
    \end{scope}
    
     \begin{scope}[shift={(9,0)}]
       \node[noeud](n0) at (0,0) {};
    \node[noeud](n1) at (1,0) {};
    \node[noeud](n2) at (2,0) {};
    \node[noeud](n3) at (3,0) {};
       \node[noeud](n4) at (4,0) {};
       
    \node[code](c0) at (0,1) {};
    \node[code](c1) at (1,1) {}; 
    \node[code](c2) at (2,1) {};
      \node[code](c3) at (3,1) {};
       \node[code](c4) at (4,1) {};
       
     \foreach \I in {0,1,2,3,4}
        \draw[arete] (c\I) -- (n\I);
    
    \draw[arete] (c0)--(c1)--(c2)--(c3)--(c4);
    
    \node at (2,-0.5) {$T_5$};
    \end{scope}
    
    \end{tikzpicture}
    \caption{The trees $T_i$ for $3\leq i \leq 5$ used in the proof of Theorem~\ref{thm:NC-trees}. The vertices of $S_i$ are in black.}
    \label{fig:tighttreePi}
\end{figure}

\begin{figure}[!ht]
    \centering
    \begin{tikzpicture}
    
    \node[noeud](n0) at (0,0) {};
    \node[noeud](n1) at (1,0) {};
    \node[noeud](n2) at (2,0) {};
    \node[code](c0) at (0,1) {};
    \node[code](c1) at (1,1) {}; 
    \node[code](c2) at (2,1) {};
    \foreach \I in {0,1,2}
        \draw[arete] (c\I) -- (n\I);
    \draw[arete] (c0)--(c1)--(c2);

    \node[noeud](n3) at (4,0) {};
    \node[noeud](n4) at (5,0) {};
    \node[noeud](n5) at (6,0) {};
    \node[code](c3) at (4,1) {};
    \node[code](c4) at (5,1) {}; 
    \node[code](c5) at (6,1) {};
    \foreach \I in {3,4,5}
        \draw[arete] (c\I) -- (n\I);
    \draw[arete] (c3)--(c4)--(c5);
    
    \node[noeud](n6) at (2,4) {};
    \node[noeud](n7) at (3,4) {};
    \node[noeud](n8) at (4,4) {};
    \node[code](c6) at (2,3) {};
    \node[code](c7) at (3,3) {}; 
    \node[code](c8) at (4,3) {};
    \foreach \I in {6,7,8}
        \draw[arete] (c\I) -- (n\I);
    \draw[arete] (c6)--(c7)--(c8);

    \node[noeud](n9) at (6,4) {};
    \node[noeud](n10) at (7,4) {};
    \node[noeud](n11) at (8,4) {};
    \node[code](c9) at (6,3) {};
    \node[code](c10) at (7,3) {}; 
    \node[code](c11) at (8,3) {};
    \foreach \I in {9,10,11}
        \draw[arete] (c\I) -- (n\I);
    \draw[arete] (c9)--(c10)--(c11);
    
        \node[noeud] (n23) at ($(c2)!0.5!(c3)$) {};
    \node[noeud] (n37) at ($(c3)!0.5!(c7)$) {};
    \node[noeud] (n59) at ($(c5)!0.5!(c9)$) {};

    \node[noeud] (0) at ($(n5)+(1,0)$) {};

    \draw[arete] (c2)--(n23)--(c3)--(n37)--(c7) (c5)--(n59)--(c9) (0)--(n5);

\end{tikzpicture}
    \caption{Example of a tree $T$ with $k=12$ and $\nc(T,k)=28$, as constructed in Theorem~\ref{thm:NC-trees}. The vertices in $S$ are the dark ones.}
    \label{fig:tighttreeID-explicit}
\end{figure}

\begin{remark}
We remark that Theorem~\ref{thm:TW-UB} would give $\nc(G,k)\leq 2k+1$ for $w=1$ if it was defined in that case. However, we have the tight upper bound of $\nc(G,k)\leq \lfloor\frac{7}{3}k\rfloor$ for $w=1$ and $2k+1<\lfloor\frac{7}{3}k\rfloor$ when $k\geq 6$. Hence, the neighbourhood complexity behaves differently for the cases of $\tw(G)=1$ and $\tw(G)\geq2$.
\end{remark}

\section{Upper bounds for widths at least 2}\label{sec:UB}

We now proceed with the main results of the paper: the upper bounds for graphs of treewidth and pathwidth at least~2.

\subsection{Treewidth}

In the following, we give the proof for the tight upper bound of neighbourhood complexity with respect to treewidth.

\TWUB*
\begin{proof}
Fix $w\geq 2$ and $k\geq w+1$. Take $G_0$ as a graph with treewidth at most $w$, that maximizes $\nc(G_0,k)$ with a minimum number of vertices, and among the possible graphs take one with a minimum number of edges. By the definition of the neighbourhood complexity, there exists a set $S$ of size $k$ such that the number of neighbourhoods within $S$ is $\nc(G_0,k)$.

As a first observation, for every $u,v\in V(G_0)\setminus S$, by the minimality of the number of vertices we have $N[u]\cap S\neq N[v] \cap S$ and by the minimality of the number of edges, there is no edge between $u$ and $v$.

Secondly, observe that the maximality of $\nc(G_0,k)$ gives the existence of a vertex $v_0\in V(G_0)\setminus S$ such that $N[v]\cap S=\emptyset$, by the minimality of the number of edges we have $N(v_0)=\emptyset$ and by the minimality of the number of vertices, we have that such $v_0$ is unique. We denote $G=G_0-\{v_0\}$, observe that $\nc(G,k)=\nc(G_0,k)-1$ and that the first observation about a pair of vertices outside of $S$ in $G_0$ still holds in $G$.

 Let $s_r$ be a vertex in $S$.
  Let $(T,X)$ be a rooted tree-decomposition of $G$ of width at most $w$. By adding a node and changing the root if necessary, we can assume that the root node $r\in V(T)$ has bag~$\{s_r\}$. 
  For any vertex $s\in S$, let $r_s$ be the node closest to $r$ in $T$ whose bag contains $s$. We call a node $r_s$ an {\em $S$-root node}.
  We can also assume that all $S$-root nodes are distinct. Indeed, if two $S$-root nodes are the same, say, $r_s=r_{s'}$, we create a copy of $r_s$, add it in $T$ between $r_s$ and the parent of $r_s$ ($r_s$ cannot be the root node $r$ of $T$ as $X(r)=\{s_r\}$), and remove $s$ from the bag of the newly created node. Now, $r_s\neq r_{s'}$.
  In the following, we say that such a tree-decomposition, i.e., a rooted tree-decomposition with root bag $\{s_r\}$ and distinct $S$-root nodes, is {\em good}.
 
 \begin{figure}
     \centering
     \begin{tikzpicture}
     \node[code](1) at (0,0) {};
     \draw (1) node[above left] {$s_1$};
     \node[code](2) at (1.5,-1) {};
         \draw (2) node[above right] {$s_2$};
     \node[code](3) at (0,-2) {};
         \draw (3) node[above left] {$s_3$};
    \node[code](4) at (-1.5,-2) {};
        \draw (4) node[below left] {$s_4$};

    \node[noeud](a) at (0,-1) {};
        \draw (a) node[above left] {$a$};
    \node[noeud](b) at (1,0.5) {};
        \draw (b) node[above right] {$b$};
      \node[noeud](c) at (2,-1.5) {};
          \draw (c) node[below right] {$c$};
       \node[noeud](d) at (0,-3) {};
           \draw (d) node[above left] {$d$};
        \node[noeud](e) at (0.25,0.75) {};
                \draw (e) node[above left] {$e$};
         \node[noeud](f) at (-1,-0.25) {};
             \draw (f) node[above left] {$f$};
        \node[noeud](g) at (-2,-1.5) {};
            \draw (g) node[above left] {$g$};
         \node[noeud](h) at (2.5,-1) {};
             \draw (h) node[above right] {$h$};
              \node[noeud](i) at (-1.25,-2.5) {};
             \draw (i) node[below left] {$i$};
    
    \draw[arete] (e)--(1)--(b)--(2)--(c)--(3)--(2)--(h);
      \draw[arete] (d)--(3)--(a)--(1)--(2) (g)--(4)--(3)--(i)--(4)--(1)--(f)--(4);
    
\node at (0,-3.5) {}; 

     \end{tikzpicture}
     \hfil    
    \scalebox{0.9}{ \begin{tikzpicture}[level distance=1cm,
  level 2/.style={sibling distance=2.3cm},
  level 3/.style={sibling distance=1.25cm}]
  \tikzstyle{every node}=[draw, rounded corners]
 \node[line width=2pt](s1) {$s_1$}
    child {node[line width=2pt](s3) {$s_1s_3a$}
     child {node[line width=2pt](s2) {$s_1s_2s_3$}
        child {node {$s_1s_2b$}
            child {node{$s_2h$}}}
        child {node {$s_2s_3c$}}}
     child {node {$s_3d$}}
     child {node[line width=2pt](s4) {$s_1s_3s_4$}
        child {node {$s_3s_4i$}
            child {node {$s_4g$}}}
        child {node {$s_1s_4f$}
            child {node {$s_1e$}}}}
     }
;
    
     \tikzstyle{every node}=[]
    \draw (s1) node[left=0.2cm] {\small $r_{s_1}$};
  \draw (s2) node[left=0.5cm] {\small $r_{s_2}$};
   \draw (s3) node[right=0.6cm] {\small $r_{s_3}$};
      \draw (s4) node[right=0.6cm] {\small $r_{s_4}$};
 
 \begin{scope}[shift={(5,0.5)}]
     \node at (0,-0.5) {$Z=\{s_1,s_2,s_3,s_4,a\}$};
    \node at (0,-1) {$V_{s_1}=\emptyset$};
\node at (0,-1.5) {$V_{s_2}=\{b,c,h\}$};
\node at (0,-2) {$V_{s_3}=\{d\}$};
\node at (0,-2.5) {$V_{s_4}=\{e,f,g,i\}$};
\end{scope}

\draw[dashed] (-2.3,-3.3) ellipse (1.6cm and 2cm);
\node at (-2.3,-4.8) {$\widetilde{T_{s_2}}$};
\draw[dashed] (2.3,-3.3) ellipse (1.6cm and 2cm);
\node at (2.3,-4.8) {$\widetilde{T_{s_4}}$};

\draw[dashed] (0,-1.75) ellipse (0.8cm and 1.25cm);
\node at (0,-2.6) {$\widetilde{T_{s_3}}$};

\draw[dashed] (-0.2,0) ellipse (0.7cm and 0.4cm);
\node at (1,0) {$\widetilde{T_{s_1}}$};
     \end{tikzpicture}}
     \caption{Notations for the proof of Theorem \ref{thm:TW-UB}, with a graph $G$ of treewidth~2, and set $S=\{s_1,s_2,s_3,s_4\}$, along with a (good) tree-decomposition of $G$.}
     \label{fig:twupperbound}
 \end{figure}

 For the following definitions, an illustrative example is given in Figure~\ref{fig:twupperbound}. For $s\in S$, we define $T_s$ as the subtree of $T$ rooted at $r_s$. We define the subtree $\widetilde{T_s}$ of $T_s$ as the one obtained from $T_s$ by removing from $T_s$ all nodes of each tree $T_{s'}$ with $r_{s'}\in V(T_s)$ and $s\neq s'$. In other words, $\widetilde{T_s}$ contains only the nodes that do not appear in a tree $T_{s'}$ whose root node is strictly farther from $r$ in $T$ than $r_s$ is. 
 Note that if $s'\in X(r_s)$, then $r_{s'}$ is an ancestor of $r_s$ (i.e., it is on the path between $r_s$ and $r$).

  Let $Z\subseteq V(G)$ be the union $\bigcup_{s\in S} X(r_s)$ of the bags of the nodes in $\{r_s~|~s\in S\}$ (in particular $S\subseteq Z$).
  Since $r=r_{s_r}$, every vertex is in a bag of $T_{s_r}$ and thus, every vertex must appear in a bag of a node of some tree $\widetilde{T_s}$. For a given $s\in S$, let us denote by $V_s$ the vertices of $V(G)\setminus Z$ contained in the bags of nodes in $\widetilde{T_s}$. Since every vertex belongs to a bag of a node in some tree $\widetilde{T_s}$, we have $V(G)=Z\cup\bigcup_{s\in S}V_s$. Moreover, since $V_s\cap Z=\emptyset$ for each $s\in S$, the collection of sets $\{V_s~|~s\in S\}\cup \{Z\}$ forms a partition of $V(G)$. Indeed, we cannot have $x\in V_s\cap V_{s'}$ for two distinct $s,s'\in S$ since then $x\in Z$. Moreover, if $x\in V_s$ for some $s\in S$, then $x\not\in S$.
  
  We note that during the rest of the proof, we sometimes modify our good tree-decomposition $(T,X)$ or discuss about different good tree-decompositions. For each such tree-decompositon, we can define $s_r$, $V_s$, $Z$ and other concepts in the same way for these new tree-decompositions as we have done above for $(T,X)$. Then, we could name these objects with a superscript $T$, that is, we could discuss for example about $V_s^T$. However, as adding superscript $T$ to almost every notation could be cumbersome for the reader, we have omitted the superscript $T$ throughout the proof as an abuse of notation.

 \begin{claim}\label{claim:neighboursinX}
For any good tree-decomposition of $G$, for each $s\in S$, for each $x\in V_s$, $N(x)\subseteq X(r_s)$.
 \end{claim}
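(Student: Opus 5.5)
The plan is to use the structure of the good tree-decomposition together with the first observation about $G_0$: vertices outside $S$ are pairwise non-adjacent. Fix $s\in S$, $x\in V_s$, and a neighbour $y$ of $x$. Since $x\in V_s$, we have $x\notin Z$, and in particular $x\notin S$, so $y\in S$. It therefore suffices to prove that every $s'\in S$ adjacent to $x$ lies in $X(r_s)$.

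The key step is to locate the subtree $T_x$ of nodes whose bags contain $x$. First, $T_x$ meets $\widetilde{T_s}$, since $x$ lies in a bag of some node of $\widetilde{T_s}$. Second, $T_x$ is contained in $\widetilde{T_s}$. Suppose instead that $T_x$ contains a node outside $\widetilde{T_s}$. Because $T_x$ is connected, it contains an edge of $T$ joining a node of $\widetilde{T_s}$ to a node outside it. Such an edge is of one of two kinds. Either it joins $r_s$ to its parent, and then $x\in X(r_s)\subseteq Z$. Or it joins some node of $\widetilde{T_s}$ to a root $r_{s''}$ of a subtree $T_{s''}$ removed when forming $\widetilde{T_s}$, and then $x\in X(r_{s''})\subseteq Z$. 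Both contradict $x\notin Z$.

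Now take $s'\in S$ adjacent to $x$. Some bag contains both $x$ and $s'$, and by the previous step this bag belongs to a node $t$ of $\widetilde{T_s}$. The node $r_{s'}$ is an ancestor of $t$ (or $t$ itself), because $T_{s'}$ is connected and contains the root-closest node among those whose bags contain $s'$. If $r_{s'}$ were a proper descendant of $r_s$, then $t$ would lie in $T_{s'}$ and hence have been removed from $\widetilde{T_s}$, which is impossible. The $S$-root nodes are distinct, so $r_{s'}\neq r_s$ unless $s'=s$. Also, $r_{s'}$ is not a proper descendant of $r_s$ lying off the path from $r_s$ to $t$. Hence $r_{s'}$ is an ancestor of $r_s$ or equal to it. The path from $t$ up to $r_{s'}$ then passes through $r_s$, and by connectivity of $T_{s'}$ we get $s'\in X(r_s)$.

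The main obstacle is the careful bookkeeping of ancestor relations, in particular the claim that the only edges leaving $\widetilde{T_s}$ are the parent edge of $r_s$ and the edges into the roots $r_{s''}$ of removed subtrees. This follows from the construction: every removed node lies in some $T_{s''}$ whose root $r_{s''}$ is a descendant of $r_s$, and the nearest removed node along any downward path from $\widetilde{T_s}$ is such a root.
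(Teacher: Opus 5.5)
Your proof is correct and follows essentially the same route as the paper. Both arguments use three facts: the neighbours of $x$ lie in $S$; $x\notin Z$ keeps $x$ out of every $S$-root bag, so the connected set of nodes containing $x$ cannot pass through $r_s$ or any $r_{s''}$; and $r_{s'}$ and $r_s$ are comparable on the path from $t$ to the root. The only difference is organisational: you first isolate that all bags containing $x$ lie in $\widetilde{T_s}$ and then do a direct case analysis, whereas the paper assumes $s'\notin X(r_s)$, places $r_{s'}$ strictly between $t$ and $r_s$, and concludes that $x$ would appear only in $T_{s'}$, contradicting $x\in V_s$.
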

 
 \begin{claimproof}
  First, since there are no edges between the vertices outside of $S$, all the neighbours of $x$ are in $S$. 
  Assume towards contradiction that there exists a vertex $x\in V_s$ with a neighbour $s'\in S$ which is not in the bag of node $r_s$. Since $xs'$ is an edge, there exists a node $t$ of $T$ whose bag contains both $x$ and $s'$. Since $x\in V_s$, we have $x\notin Z$ and all the bags containing $x$ must belong to nodes of $T_s$ since $T$ is a tree-decomposition and hence, nodes with bags containing $x$ form a connected subtree.
  In particular, $t$ must belong to $T_s$. Consider the path from $t$ to $r$ in $T$: this path contains $r_{s'}$ and $r_s$. Since by our hypothesis, $s'$ is not in the bag of $r_s$, node $r_{s'}$ lies between $t$ and $r_s$ on this path. Since $x$ is not in $Z$, it is not in the bag of $r_{s'}$, and thus,  all the nodes whose bags contain $x$ must belong to $T_{s'}$ and cannot be in $\widetilde{T_{s}}$, a contradiction to the fact that $x\in V_s$. \smallqed
 \end{claimproof}

\medskip

 \begin{claim}\label{claim:V1S}
There exists a good tree-decomposition of $G$ such that $Z=S$.
\end{claim}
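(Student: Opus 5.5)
The plan is an extremal argument over good tree-decompositions. Among all good tree-decompositions of $G$ of width at most $w$, I choose one that minimizes the potential $\Phi=\sum_{s\in S}|X(r_s)\setminus S|$, breaking ties by the total size $\sum_{t}|X(t)|$ of all bags. If $\Phi=0$ then $Z=S$ and we are done. So assume some $x\in Z\setminus S$ lies in $X(r_s)$ for some $s\in S$; I will build another good tree-decomposition of width at most $w$ with strictly smaller $\Phi$. Throughout I use the first observation in the proof of Theorem~\ref{thm:TW-UB}: vertices outside $S$ are pairwise non-adjacent, so $N(x)\subseteq S$. Hence the only constraints involving $x$ are the edges $xs'$ with $s'\in N(x)\subseteq S$, together with the connectivity of $T_x=\{t : x\in X(t)\}$.

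\textbf{Easy case.} Among the $S$-root nodes whose bags contain $x$, choose $r_s$ closest to $r$. First suppose $r_s$ is the topmost node of $T_x$, i.e.\ the parent $p$ of $r_s$ does not contain $x$ (this $p$ exists, since $X(r)=\{s_r\}$ and $x\notin S$). Insert a new node $q$ between $p$ and $r_s$ with bag $X(r_s)\setminus\{x\}$. Then:
\begin{itemize}
\item $T_x$ stays connected, because $q$ lies outside it;
\item $T_y$ stays connected for every other vertex $y$, because $q$ is a copy of $r_s$ minus $x$;
\item all edges are still covered, since $r_s$ is kept;
\item the width does not increase.
\end{itemize}
The new $S$-root node of $s$ is $q$, and $x\notin X(q)$. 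No other $S$-root node changes, and the root and the distinctness of the $S$-root nodes are preserved, so the decomposition is still good. Thus $\Phi$ drops by one.

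\textbf{Hard case and main obstacle.} The remaining case is $x\in X(p)$, so $x$ "enters" above $r_s$, and there I cannot simply delete $x$ from the new top node of $s$. I expect this to be the main obstacle. My first attempt is to shrink $T_x$. Remove $x$ from every bag, and reinsert it only in a minimal subtree of $T$ that contains, for each $s'\in N(x)$, one node containing $s'$; I would take that node inside $T_{s'}$, as deep as possible. For each $s'\in N(x)$, the sets $T_{s'}$ and $T_x$ intersect, so this is at least plausible. This reinsertion keeps all bag sizes at most their old sizes, because it uses only nodes that previously contained $x$. The argument would then have to show that the minimal choice of $T_x$ makes its top node lie in no $S$-root bag that it did not meet before, so that we are back in the easy case or $\Phi$ strictly drops.

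If that fails, I would use Claim~\ref{claim:neighboursinX} together with Lemma~\ref{lem:noneighbourinfullbag}. Since every neighbour of $x$ is in $S$, an $x$ lying above $r_s$ can be detached: I would replace $x$ in the path between $t_x$ (the topmost node of $T_x$) and $r_s$ by attaching a fresh leaf node. That leaf would carry $x$ together with the relevant neighbours of $x$, and the $S$-root nodes would be kept untouched. Lemma~\ref{lem:noneighbourinfullbag} is what should guarantee that such a leaf fits within width $w$.

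In both attempts, the final check is routine: the tree-decomposition axioms hold, the root bag is still $\{s_r\}$, the $S$-root nodes remain distinct (re-applying the copying trick from the definition of a good tree-decomposition if needed), and $\Phi$ strictly decreases, which contradicts the minimality of the chosen decomposition.
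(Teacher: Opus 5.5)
Your easy case is correct. When $x\in X(r_s)$ and the parent of $r_s$ does not contain $x$, inserting above $r_s$ a copy of $r_s$ without $x$ keeps the decomposition good and removes $x$ from that $S$-root bag.

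The gap is the hard case, which you only sketch (``at least plausible'', ``would have to show'', ``if that fails''). It cannot be closed by decomposition surgery that uses only the facts you invoke: that $V(G)\setminus S$ is independent, Claim~\ref{claim:neighboursinX}, and Lemma~\ref{lem:noneighbourinfullbag}. Here is a counterexample. Take $w=2$ and $G=K_{1,3}$, with centre $x$ and $S=\{a,b,c\}$ the leaves; every hypothesis you use holds. Suppose a good decomposition of width at most~$2$ has $x$ in no $S$-root bag. Then $T_x$ meets $T_a,T_b,T_c$ but avoids $r_a,r_b,r_c$, so it lies strictly below all three. Hence all three are ancestors of the top node of $T_x$, and the deepest of them contains $a,b,c$. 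So $T_a,T_b,T_c,T_x$ pairwise intersect, and the Helly property yields a bag containing $\{a,b,c,x\}$, of size $4>w+1$. Consequently a $\Phi$-minimal good decomposition of this graph has $\Phi>0$ and admits no improving move. The hard case genuinely occurs, and no argument of your kind can resolve it. In particular, a fresh leaf for $x$ would have to contain all of $N(x)$, and Lemma~\ref{lem:noneighbourinfullbag} does not bound $|N(x)|$.

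What is missing is the extremality of $G$, specifically its edge-minimality, which the paper uses as follows.
\begin{enumerate}
\item Minimize $|Z\setminus S|$ and pick $x\in Z\setminus S$ of minimum degree.
\item If $N(x)\subseteq X(r_s)$ for some $s$, remove $x$ from every bag and hang a leaf $N(x)\cup\{x\}$ below $r_s$.
\item Otherwise take $s_1\in N(x)$ with the deepest $S$-root node and $s_2\in N(x)\setminus X(r_{s_1})$; then $s_1$ and $s_2$ never share a bag.
\item Deleting the edges from $x$ to $S\setminus\{s_1,s_2\}$ would contradict edge-minimality unless some $y$ has $S$-neighbourhood $\{s_1,s_2\}$. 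Such a $y$ cannot exist: $y\notin S$ since $s_1s_2\notin E(G)$; $y\notin Z$ by minimality of $\deg(x)$; and $y\in V_s$ is impossible by Claim~\ref{claim:neighboursinX}. Hence $N(x)=\{s_1,s_2\}$.
\item Replace $x$ by $s_2$ along the path from $r_{s_1}$ to a node containing $x$ and $s_2$. Then hang a leaf $\{s_1,s_2,x\}$ below $r_{s_1}$; this is where $w\geq 2$ enters.
\item If the new $r_{s_2}$ collides with another $S$-root node, restore distinctness.
\end{enumerate}
If you adopt this move, note that it can shift $r_{s_2}$ to a bag containing other vertices of $Z\setminus S$. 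So progress should be measured by $|Z\setminus S|$, as the paper does, rather than by your multiplicity-weighted $\Phi$.
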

  
\begin{claimproof}
Consider a good tree-decomposition $(T,X)$ of $G$ of width at most $w$ that minimizes the number of vertices in $Z\setminus S$. We can assume that for each $s\in S$, each vertex in $V_s$ appears in only one bag that is associated to a leaf node of $T$ adjacent to node $r_s$. Indeed, for each $s\in S$, for each $x\in V_s$, by Claim  \ref{claim:neighboursinX}, we have $N(x)\subseteq X(r_s)$. Thus, we can remove $x$ from all bags, and add a leaf node attached to $r_s$ with bag $N(x)\cup\{x\}$ (since $x$ is not in $X(r_s)$, neighbourhood $N(x)$ has size at most $w$ by Lemma \ref{lem:noneighbourinfullbag}). Note that with this operation we do not add any vertices to $Z$, thus the tree-decomposition is still minimizing the number of vertices in $Z\setminus S$.
Assume towards contradiction that $Z\setminus S$ is nonempty.

We first consider the case where  there exists $x\in Z\setminus S$ such that $N(x)\subseteq X(r_s)$ for some $s$. Note that $x$ has at most $w$ neighbours in $X(r_s)$ by Lemma \ref{lem:noneighbourinfullbag} and thus $N(x)$ has size at most $w$. We transform the decomposition $T$ by removing $x$ from all bags and adding a child node $t$ to $r_s$ with bag $N(x)\cup\{x\}$. This decomposition is still good and has one less vertex in $Z$, contradicting the minimality of $(T,X)$.

Thus, the neighbourhood of any vertex in $Z\setminus S$ is not included in a bag $X(r_s)$ for any $s\in S$.
Consider next a vertex $x$ in $Z\setminus S$ with a smallest degree. 
Let $s_1$ be a vertex in $N(x)$ with an 
$S$-root node farthest from the root. By hypothesis in this second case, $N(x)$ cannot be included in $X(r_{s_1})$, so there exists $s_2\in N(x)$ with $s_2\notin X(r_{s_1})$. We must have $s_1\notin X(r_{s_2})$, otherwise $r_{s_1}$ would be on the path between $r_{s_2}$ and $r$, contradicting the maximal depth of $r_{s_1}$. Note that since $s_2\notin X(r_{s_1})$ and $s_1\notin X(r_{s_2})$, vertices $s_1$ and $s_2$ cannot be together in a bag. In particular, they are non-adjacent.

We next prove that $N(x)=\{s_1,s_2\}$. Assume it is not the case, then there should be a vertex $y$ in $V(G)$ with neighbourhood $\{s_1,s_2\}$, otherwise we could remove all edges between $x$ and $S\setminus \{s_1,s_2\}$, contradicting the minimality of the number of edges of $G$. Note that $y\not\in S$ since then $N[y]\cap S=\{s_1,s_2\}$ would imply that $s_1$ and $s_2$ are adjacent which contradicts previous paragraph.
But then $y$ should not be in $Z$ by minimality of $x$. Thus $y$ should be in some set $V_s$, for $s \in S$. By Claim \ref{claim:neighboursinX}, vertices $s_1$ and $s_2$ should be both in the bag of $X(r_s)$, a contradiction. 
Thus, $N(x)=\{s_1,s_2\}$.

Since $N(x)=\{s_1,s_2\}$, there must be a node $t_1$ whose bag contains $x$ and $s_1$ and a node $t_2$ whose bag contains $x$ and $s_2$. Let $P$ be a path in $T$ between $t_1$ and $t_2$. Then, the bag of each node on $P$ contains $x$. Note that $r_{s_1}$ belongs to this path. Without loss of generality, we can assume that $t_1=r_{s_1}$. 

We transform $T$ by removing $x$ from all bags in $T$, adding $s_2$ in the bags of all nodes in $P$ where it was not (since $x$ was in these bags, they still have size at most $w+1$) and adding a child node $t_3$ to $t_1$ with bag $\{s_1,s_2,x\}$ (recall that $w\geq2$, so this is possible).

If the tree-decomposition is not good, then it means that $r_{s_2}$ has changed and is now the same $S$-root node as another $r_{s'}$ for some $s'\in S$. Let $t$ be this node and $U$ its bag.
We add a new node $t_4$ just above $t$ with bag $U\setminus \{s_2\}$ (if $t$ was actually the root node $r$, then it means that $U=\{s_r,s_2\}$ and $t_4$ is the new root of the tree). Then the $S$-root node of $s_2$ is still $t$ and the $S$-root node of $s'$ is now $t_4$. Thus we obtain a good tree-decomposition but where $x$ is not anymore in $Z$. We did not add any vertices to $Z$ since vertices not in  $Z$ appear only in leaf nodes 
Thus the new tree-decomposition has one less vertex in $Z$, contradicting the minimality of $(T,X)$.
\smallqed
\end{claimproof}

\begin{claim}\label{claim:xadjs}
There exists a good tree-decomposition of $G$ with $Z=S$ 
  such that for each $s\in S$, for each $x\in V_s$, we have $s\in N(x)$.
\end{claim}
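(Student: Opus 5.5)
Start from a good tree-decomposition $(T,X)$ with $Z=S$, as provided by Claim~\ref{claim:V1S}. Among all such decompositions, choose one minimizing the number of pairs $(s,x)$ with $x\in V_s$ and $s\notin N(x)$. Since $Z=S$, every vertex of $V_s$ lies outside $S$. By Claim~\ref{claim:neighboursinX} (which holds for any good decomposition), every $x\in V_s$ satisfies $N(x)\subseteq X(r_s)$. As in the proof of Claim~\ref{claim:V1S}, we may assume that each such $x$ appears in exactly one bag, a leaf child of $r_s$ with bag $N(x)\cup\{x\}$; this bag has size at most $w+1$ by Lemma~\ref{lem:noneighbourinfullbag}. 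These normalisations change neither $Z$ nor the sets $V_s$.

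Suppose some $x\in V_s$ has $s\notin N(x)$. We want to reattach the leaf bag of $x$ to the root node of some neighbour of $x$.

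\textbf{Case $N(x)\neq\emptyset$.} Recall that $v_0$ was removed, so $N(x)\subseteq S$. Let $s'$ be the neighbour of $x$ whose $S$-root node $r_{s'}$ is deepest. For every $s''\in N(x)$ we have $s''\in X(r_s)$, so $r_{s''}$ is an ancestor of $r_s$ (or $r_s$ itself). Hence all nodes $r_{s''}$ with $s''\in N(x)$ lie on the path from $r$ to $r_s$. Each such $s''$ lies in $X(r_{s''})$ and in $X(r_s)$, so by connectivity of $T_{s''}$ it lies in every bag on the path between them. In particular, every $s''\in N(x)$ lies in $X(r_{s'})$, since $r_{s'}$ is the deepest of these nodes and lies on that path. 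Therefore $N(x)\subseteq X(r_{s'})$.

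Now move the leaf node containing $x$ so that it becomes a child of $r_{s'}$. This is still a tree-decomposition: the edges of $x$ are covered by the leaf bag, and $x$ appears in no other bag. The root bag is unchanged, the $S$-root nodes are unchanged because the leaf contains no new first occurrence of an $S$-vertex (all of $N(x)$ already appear at or above $r_{s'}$), and $Z$ is unchanged. Since $s'\neq s$ (as $s'\in N(x)$ and $s\notin N(x)$), the leaf now lies in $\widetilde{T_{s'}}$ and not in the subtree of any deeper $S$-root node, so $x\in V_{s'}$ with $s'\in N(x)$. The other vertices keep their sets $V_{s''}$, so the number of bad pairs strictly decreases, contradicting minimality.

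\textbf{Case $N(x)=\emptyset$.} Then $N[x]\cap S=\emptyset$, which is the neighbourhood of $v_0$ in $G_0$. In $G_0$ this contradicts the first observation, since $x$ and $v_0$ are distinct vertices outside $S$ with equal $S$-neighbourhoods. So this case cannot occur.

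The step I expect to be most delicate is verifying that moving the leaf preserves goodness and does not alter other $V_{s''}$ or $Z$. This relies on the leaf containing only $x$ together with vertices already present at $r_{s'}$. Identifying $x$'s new class requires checking that the new leaf is not inside any $T_{s''}$ with $r_{s''}$ strictly below $r_{s'}$. This holds because the leaf is a direct child of $r_{s'}$ and is not itself an $S$-root node.
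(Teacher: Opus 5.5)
Your proof is correct and is essentially the paper's argument. You take a bad vertex $x\in V_s$, let $s'$ be the neighbour of $x$ whose $S$-root node is deepest, deduce $N(x)\subseteq X(r_{s'})$ from connectivity, and re-hang $x$ in a leaf bag $N(x)\cup\{x\}$ below $r_{s'}$, which keeps the decomposition good with $Z=S$ and removes one bad vertex; you also rule out $N(x)=\emptyset$ via $v_0$ exactly as the paper does. The differences are minor: the paper skips your preliminary normalisation and simply deletes $x$ from all bags before adding the new leaf, and $N(x)\subseteq S$ comes from edge-minimality (no edges between vertices outside $S$), not from the removal of $v_0$.
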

  
   \begin{claimproof}
   Assume this is not the case and consider a good tree-decomposition of $G$ that satisfies Claim~\ref{claim:V1S} with a minimum number of vertices not in $S$ that do not satisfy the claim (i.e., vertices $x$ for which there exists $s\in S$ such that $x\in V_s$ but $s\notin N(x)$).
   
  Let $x\in V_s$ be a vertex such that $s\notin N(x)$.
  By Claim~\ref{claim:neighboursinX}, $N(x)\subseteq X(r_s)$. In particular, for each vertex $s'\in N(x)$, the node $r_{s'}$ should be on the path between the root $r$ of $T$ and $r_s$. Moreover, remember that by construction of $G$, $N(x)\neq \emptyset$. Let $s_1$ be a vertex of $N(x)$ with an $S$-root node farthest from the root. Then for each $s'\in N(x)$, node $r_{s_1}$ is on the path between $r_s$ and $r_{s'}$. Since $s'$ appears in the bags of these two nodes, $s'$ should be in $X(r_{s_1})$. This means that $N(x)\subseteq X(r_{s_1})$. 
As before, we transform the decomposition $T$ by removing $x$ from all bags and adding a child node $t$ to $r_{s_1}$ with bag $N(x)\cup\{x\}$. This decomposition is still good, has width at most $w$, satisfies Claim \ref{claim:V1S} (we did not add any vertex to $Z$) and has one less vertex not satisfying the claim, contradicting the minimality of the tree-decomposition. \smallqed
 \end{claimproof}
 
 See Figure \ref{fig:twupperboundfinal} for an example of a tree-decomposition of the graph $G$ of Figure \ref{fig:twupperbound} that satisfies Claim~\ref{claim:xadjs}.
 \quad 

\begin{figure}[h]
    \centering
  
  \scalebox{0.9}{ \begin{tikzpicture}[level distance=1.2cm,
  level 2/.style={sibling distance=1.7cm},
  level 3/.style={sibling distance=1.25cm}]
  \tikzstyle{every node}=[draw, rounded corners]
 \node[line width=2pt](s1) {$s_1$}
    child {node[line width=2pt](s3) {$s_1s_3$}
     child {node[line width=2pt](s2) {$s_1s_2s_3$}
        child {node {$s_1s_2b$}}
        child {node {$s_2h$}}
        child {node {$s_2s_3c$}}}
     child {node {$s_1s_3a$}}
     child {node {$s_3d$}}    
     child {node[line width=2pt](s4) {$s_1s_3s_4$}
        child {node {$s_3s_4i$}}
        child {node {$s_4g$}}
        child {node {$s_1s_4f$}}}
    }
    child {node {$s_1e$}}
;
    
     \tikzstyle{every node}=[]
    \draw (s1) node[left=0.3cm] {$r_{s_1}$};
  \draw (s2) node[left=0.6cm] {$r_{s_2}$};
   \draw (s3) node[left=0.6cm] {$r_{s_3}$};
      \draw (s4) node[right=0.6cm] {$r_{s_4}$};
 
 \begin{scope}[shift={(6.5,0)}]
     \node at (0,-0.5) {$S=\{s_1,s_2,s_3,s_4\}$};
   \node at (0,-1) {$V_{s_1}=\{e\}$};
\node at (0,-1.5) {$V_{s_2}=\{b,c,h\}$};
\node at (0,-2) {$V_{s_3}=\{a,d\}$};
\node at (0,-2.5) {$V_{s_4}=\{f,g,i\}$};
\end{scope}

     \end{tikzpicture}}
     \caption{A tree-decomposition of graph $G$ of Figure \ref{fig:twupperbound} that satisfies Claim \ref{claim:xadjs}.}
     \label{fig:twupperboundfinal}
 \end{figure}

 We can now count the vertices using a good tree-decomposition that satisfies Claim~\ref{claim:xadjs}. Let $n=|V(G)|$. Since $Z=S$, all vertices outside of $S$ are in exactly one set $V_s$. Thus $n-k=\sum_{s\in S}|V_s|$.
 
 Let $w_s=|X(r_s)|$. Since all vertices in $V_s$ have a distinct neighbourhood included in $X(r_s)$ and are adjacent to $s$, there can be at most $2^{w_s-1}$ vertices in $V_s$. Moreover, when $w_s=w+1$, we have $|V_s|\leq 2^w-1$ since a vertex cannot be adjacent to all $w+1$ vertices of $X(r_s)$ without being itself in the bag (see Lemma \ref{lem:noneighbourinfullbag}). 
  We denote $f(w_s)=1$ if $w_s=w+1$ and $f(w_s)=0$ otherwise. Thus,  $|V_s|\leq 2^{w_s-1}-f(w_s)$ and in every case, $|V_s|\leq 2^{w}-1$. 
  
  Since the root vertex $r$ of $T$ has bag $\{s_r\}$, we have $w_{s_r}=1$ and $|V_{s_r}|\leq 2^{1-1}=1$.  Let $r_{s_m}$ be an $S$-root node having a bag of maximum size and let $p=w_{s_m}$ (note that $p\leq w+1$).
  From an $S$-root node $r_{s}$ with $r_s\neq r$ to the next $S$-root node above it, $r_{s'}$, only $s$ can appear in $X(r_s)$ but not in $X(r_{s'})$. As a consequence, in the path from the root $r$ to $r_{s_m}$ there will be, for each $i\in \{1,\ldots,p-1\}$, a vertex $s_i$ with $S$-root bag of size $i$, i.e., such that $|X(r_{s_i})|=w_{s_i}=i$ (note that $s_1=s_r$).
  
  Therefore, for $w\geq2$ and $k\geq w+1$ we have \begin{align*}
      n-k=&\sum_{s\in S}|V_s|\\ 
     \leq&\sum_{i=1}^{p-1}|V_{s_i}| +\sum_{s\in S, s\notin \{s_1,\dots,s_{p-1}\}} |V_s|\\   
      \leq&\sum_{i=1}^{p-1}2^{i-1} +(k-p+1)(2^{p-1}-f(p))\\
      \leq&2^{p-1}-1 +(k-p+1)(2^{p-1}-f(p))\\
      \leq&2^{w}-1 +(k-w)(2^{w}-1)\\
      \leq&(k-w+1)(2^{w}-1).
  \end{align*}
  Above in the second last inequality, we use the assumptions $w\geq2$ and $k\geq w+1$.
  
Finally, by observing that $\nc(G,k)\leq n$ and with the relation $\nc(G_0,k)=\nc(G,k)+1$, we obtain $\nc(G_0,k)\leq (k-w+1)(2^w-1)+k+1=(k-w+1)2^w+w$ as wanted.
 \end{proof}

\subsection{Pathwidth}

We now provide a proof of Theorem~\ref{thm:PW-UB} for pathwidth. It is similar as (and simpler than) the proof of Theorem~\ref{thm:TW-UB}.

\PWUB*
\begin{proof}
As in the proof of Theorem~\ref{thm:TW-UB} for treewidth, we fix the pathwidth $w$ and assume that $k\geq w+1$. Consider $G_0$ as a graph with pathwidth at most $w$ that maximizes $\nc(G_0,k)$ and, subject to this, that has a minimum number of vertices; among all possible graphs, consider one with a minimum number of edges. By the definition of the neighbourhood complexity, there exists a set $S$ of size $k$ such that the number of closed neighbourhoods within $S$ is $\nc(G_0,k)$. By the edge-minimality of $G_0$, there is no edge among the vertices in $V(G_0)\setminus S$. Moreover, for every $u,v\in V(G_0)\setminus S$, by the minimality of the number of vertices, we have $N[u]\cap S\neq N[v] \cap S$. By the maximality of $\nc(G_0,k)$, there is a unique vertex $v_0\in V(G)\setminus S$ such that $N[v]\cap S=\emptyset$. Let $G=G_0-\{v_0\}$, and observe that $\nc(G,k)=\nc(G_0,k)-1$. 

We then consider a path-decomposition $(P,X)$ of width at most $w$, with a natural ordering of the nodes along the path $P$ from one endpoint of $P$ to the other.
Let $p_0$ be the first node of $P$ in this ordering. We can assume that $X(p_0)\cap S$ is nonempty, otherwise there would be no edges among the vertices of $X(p_0)$ and we could remove $p_0$. Let $s_1\in X(p_0)\cap S$. If $X(p_0)$ contains other vertices, then one can add a new node before $p_0$ with bag $\{s_1\}$. Thus, we can assume that $X(p_0)=\{s_1\}$.

For $s\in S$, let $p_s$ be the node closest to $p_0$ whose bag contains $s$ (note that $p_{s_1}=p_0$). We can assume $p_s\neq p_{s'}$ for two vertices $s,s'\in S$. Indeed if $p_s=p_{s'}=p$, then one can add a node $p'$ just before $p$ with bag $X(p')=X(p)\setminus  \{s'\}$. Then $p_s=p'$ and $p_{s'}=p$.

Let $s_1,\dots,s_k$ be the vertices of $S$ in the order they first appear in the bags of $P$. 
 For $i\in\{1,\dots,k-1\}$, we denote by $V^*_{i}$  the vertices of $V(G)\setminus S$  whose last bag containing them is associated to a node between $p_{s_i}$ (included) and $p_{s_{i+1}}$ (excluded). Moreover, $V^*_k$ contains the vertices whose last appearance is in a bag of a node after $p_{s_k}$ ($p_{s_k}$ included).
 Note  that the collection of sets $\{V^*_i\mid i\in \{1,\dots,k\}\}$ forms a partition of $V(G)\setminus S$. 
 Let $V_i$ be the vertices of $V^*_i$ that are not in $X(p_{s_i})$.
 For $x\in V_i$, we have $(N(x)\cap S)\subseteq X(p_{s_i})$ since no vertices of $S$ can appear for the first time in a bag of a node between $p_{s_i}$ and $p_{s_{i+1}}$ (both excluded) or in a node after $p_k$.
 If $|X(p_{s_i})\cap S|\leq w$, let $S_i=X(p_{s_i})\cap S$. Otherwise, all the vertices of $X(p_{s_i})$ are in $S$ and we let $s'\in S$ be a first vertex of $X(p_{s_i})$ that disappears, i.e., that does not appear in the bag of a node located after $p_{s_i}$ (if $p_{s_i}$ is the last node of $P$ we may add a node with a (possibly empty) bag $X(p_{s_i})\setminus \{s'\}$ after it). In this case, let $S_i=X(p_{s_i})\setminus \{s'\}$. In both cases, $|S_i|\leq w$ and $S_i\subset X(p_{s_i})$.
Furthermore, for each $x\in V_i$, $N(x)\subseteq S_i$. Indeed, $x$ appears only in the bags of nodes between $p_{s_i}$ and $p_{s_{i+1}}$ (both excluded), there are no new vertices of $S$ that can appear in the bags of these nodes and, in the case of $|X(p_{s_i})\cap S|=w+1$, vertex $x$ cannot be adjacent to $s'$ since when $x$ appears for the first time, $s'$ has already disappeared. 
In addition, we can assume that $x$ is adjacent to $s_i$, otherwise one could remove $x$ from all bags, and add a node just before $p_{s_i}$ with the bag $(X(p_{s_i})\setminus\{s_i\})\cup\{x\}$.

In conclusion, all vertices of $V_{i}$ have their neighbourhood included in $S_i$ and are adjacent to $s_i$. Hence, $|V_i|\leq 2^{|S_i|-1}$ since they all have distinct $S$-neighbourhoods.

Consider next the vertices in $V^*_i\cap X(p_{s_i})$. Let $n_i=|V^*_i\cap X(p_{s_i})|$ be their number. Recall that the vertices in $V^*_i\cap X(p_{s_i})$ are not in $S$. So as $s_i\in X(p_{s_i})$, we may assume that $n_i\leq w$.
Moreover, since $S_i\subseteq  X(p_{s_i})\cap S$, we have $|S_i|\leq \min(w,w+1-n_i)$.
Thus we have
\begin{align*}
|V^*_i|&=n_i+|V_i| \\
&\leq n_i+2^{|S_i|-1} \\
&\leq n_i+2^{\min(w-1,w-n_i)}.
\end{align*}
Note that this upper bound is maximized for $n_i=1$, with value $2^{w-1}+1$.

We can now do the final counting. Let $n=|V(G)|$. We have $n-k=\sum_{i=1}^k|V^*_i|$.
We count separately the vertices belonging to the first $w+1$ sets $V^*_1,\dots,V^*_{w+1}$ and the vertices belonging to the last set $V^*_k$. First note that all the vertices in $\bigcup_{i=1}^w V^*_i$ can only be adjacent to $s_1,\dots,s_w$ and thus, there are at most $2^w-1$ such vertices.

As a first case, assume $n_{w+1}\geq 1$. Thus, at least one vertex among $s_1,\dots,s_w$, say $s_i$, is not in $X(p_{w+1})$ and thus, cannot be adjacent to $s_{w+1}$. Thus, $s_i$ must also have its neighbourhood included in $\{s_1,\dots,s_w\}$ which means that there can only be $2^w-2$ new distinct neighbourhoods among the vertices of $\bigcup_{i=1}^w V^*_i$  and thus in total  $|\bigcup_{i=1}^{w+1}V^*_i|\leq 2^w-2+2^{w-1}+1=3\cdot2^{w-1}-1$.

For the second case, we have $n_{w+1}=0$. We directly have   $|\bigcup_{i=1}^{w+1}V^*_i|\leq 2^w-1+2^{|S_{w+1}|-1}\leq 3\cdot2^{w-1}-1$.

A similar argument can be used for $V^*_k$. Indeed, if $n_k\geq 1$, then $s_k$ can only be adjacent to vertices in $S_k$. Hence, there will be at most $2^{w-1}-1$ distinct neighbourhoods (and thus vertices) in $V_{k}$ and thus $2^{ w-n_k}-1+n_k\leq 2^{w-1}$ elements in $V^*_k$. If $n_k=0$, then we directly have $|V^*_k|\leq 2^{w-1}$.

Therefore, we have:
\begin{align*}
      n-k=&\sum_{i=1}^k|V^*_i|\\ 
     \leq&\sum_{i=1}^{w+1}|V^*_i| +\sum_{i=w+2}^{k-1} |V^*_i|+|V^*_k|\\   
      \leq& 3\cdot2^{w-1}-1 +(k-w-2)(2^{w-1}+1)+2^{w-1}\\
      \leq & (k-w+2)2^{w-1}+k-w-3\; .
      \end{align*}

  Finally, by observing that $\nc(G,k)\leq n$ and with the relation $\nc(G_0,k)=\nc(G,k)+1$, we obtain $\nc(G_0,k)\leq (k-w+2)2^{w-1}+k-w-3+k+1=(k-w+2)2^{w-1}+2k-w-2$ as desired.
\end{proof}

\section{Tight constructions for widths at least 2}\label{sec:LB}

In this section, we give for each $w\geq2$ and $k\geq 2w\pm1$ the constructions which show that our upper bounds are tight for both tree- and pathwidth. We begin by describing the construction for treewidth.

\subsection{Treewidth}

In this section, for $k\geq 2w+1\geq3$, we give a construction showing that the upper bound of Theorem~\ref{thm:TW-UB} is tight for $w\geq2$. The case with $w=1$ is not tight, but we include it, as it will be used in the proof of Theorem~\ref{thm:PW-tight}. We begin by creating the graph $G_{w,k}$ according to the following rules:
  \begin{enumerate}
      \item Let $S=\{s_1,\ldots,s_k\}$ be a vertex set, with subsets $S_i=\{s_i,\dots,s_{i+w}\}$ for each $1\leq i\leq k-w$.
      \item Create a set $V_1$ of $2^{w+1}-1$ vertices as follows. For each strict  (possibly empty) subset $X$ of $S_1$, create a vertex $v_X\in V_1$ adjacent to each vertex of $X$.
      \item   For each integer $i$ with $2\leq i\leq k-w$, create a set $V_i$ of $2^{w}-1$ vertices as follows.      
      Let $X$ be a distinct strict (possibly empty) subset of $S_i$ which contains $s_{i+w}$. For each such subset $X$, create a vertex $v_X\in V_i$ adjacent to each vertex of $X$.
    \item Create edges so that for any integer $i$ with $1\leq i\leq k-w$, the vertices in $S_i$ form a clique.
  \end{enumerate}

 See Figure \ref{fig:TW-LB} for an illustration of $G_{2,6}$. 
 
  \begin{figure}
    \centering
    \begin{tikzpicture}

    \foreach \I in {-1,0,1,...,14}
         \node[noeud](\I) at (\I,0) {};

\draw[decorate,decoration={brace,mirror,amplitude=10pt}] (-1.2,-0.3) -- node[below=0.4] {$V_1$} (5.2,-0.3);
\draw[decorate,decoration={brace,mirror,amplitude=10pt}] (5.8,-0.3) -- node[below=0.4] {$V_2$} (8.2,-0.3);
\draw[decorate,decoration={brace,mirror,amplitude=10pt}] (8.8,-0.3) -- node[below=0.4] {$V_3$} (11.2,-0.3);
\draw[decorate,decoration={brace,mirror,amplitude=10pt}] (11.8,-0.3) -- node[below=0.4] {$V_4$} (14.2,-0.3);

\foreach \J in {1,2,...,5,6}
         {\node[code](s\J) at (2.3*\J-1,2) {};
         \draw (s\J) ++ (0,0.3) node{$s_\J$} ;}

\draw[arete] (1)--(s1)--(0) (1)--(s2)--(2);
\draw[arete] (s3)--(3)--(s1) (s3)--(4)--(s2) (s3)--(5);

\foreach \I in {6,7,8}
         \draw[arete] (s4)--(\I);

\draw[arete] (6)--(s2) (7)--(s3);

\foreach \I in {9,10,11}
         \draw[arete] (s5)--(\I);

\draw[arete] (9)--(s3) (10)--(s4);

\foreach \I in {12,13,14}
         \draw[arete] (s6)--(\I);

\draw[arete] (12)--(s4) (13)--(s5);

\draw[arete] (s1)--(s2)--(s3)--(s4)--(s5)--(s6);

\path[draw,arete] (s1) to[out=30,in=150] (s3);
\path[draw,arete] (s2) to[out=30,in=150] (s4);
\path[draw,arete] (s3) to[out=30,in=150] (s5);
\path[draw,arete] (s4) to[out=30,in=150] (s6);

    \end{tikzpicture}
    \caption{Illustration of the construction of $G_{w,k}$ for treewidth $w=2$ and $k=6$. All the neighbourhoods to the black vertices are unique.}
    \label{fig:TW-LB}
\end{figure}

\begin{theorem}\label{thmTreewidth construction}
Let $k\geq 2w+1\geq 3$. The graph $G_{w,k}$ has treewidth $w$, neighbourhood complexity and order $\nc(G_{w,k},k)=\nc(G_{w,k},S)=n=(k-w+1)2^w+w$.
\end{theorem}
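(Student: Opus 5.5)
The plan is to verify three facts about $G_{w,k}$: its order, its treewidth, and the distinctness of all its $S$-neighbourhoods. Theorem~\ref{thm:TW-UB} then closes the argument. Indeed, if every vertex has a distinct $S$-neighbourhood and $|S|=k$, then $n=\nc(G_{w,k},S)\leq \nc(G_{w,k},k)\leq (k-w+1)2^w+w$. So once we show $n=(k-w+1)2^w+w$ and $\tw(G_{w,k})=w$, all the claimed equalities follow.

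\emph{Order.} Count $|S|=k$, $|V_1|=2^{w+1}-1$ (the strict subsets of the $(w+1)$-set $S_1$), and $|V_i|=2^w-1$ for each $2\leq i\leq k-w$ (the strict subsets of $S_i$ containing $s_{i+w}$). The sum is
\[
k+2^{w+1}-1+(k-w-1)(2^w-1)=(k-w+1)2^w+w,
\]
which is routine.

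\emph{Treewidth.} For the upper bound, take a path with nodes $t_1,\dots,t_{k-w}$ and bags $X(t_i)=S_i$. Consecutive bags share $S_i\cap S_{i+1}$, and each $s_j$ lies in a contiguous range of the $S_i$, so the connectivity condition holds. For each $v_X\in V_i$, attach a leaf node to $t_i$ with bag $X\cup\{v_X\}$; it has size at most $w+1$ because $X$ is a strict subset of $S_i$. This covers all edges: the clique edges inside each $S_i$ lie in $X(t_i)$, and the edges from $v_X$ lie in its leaf bag. For the lower bound, $S_1$ is a clique on $w+1$ vertices, so $\tw\geq w$.

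\emph{Distinct $S$-neighbourhoods.} This is the step needing most care.
\begin{itemize}
\item For $s_j\in S$, two indices are adjacent exactly when they lie together in some window $S_i$, i.e.\ when they differ by at most $w$. Hence $N[s_j]\cap S=\{s_\ell : \max(1,j-w)\leq \ell\leq \min(k,j+w)\}$, an interval of size at least $w+1$ (using $k\geq 2w+1$). Distinct $j$ give distinct intervals unless both ends are clipped. Both ends are clipped only when $j\leq w+1$ and $j\geq k-w$, which together with $k\geq 2w+1$ forces $k=2w+1$ and $j=w+1$, a single index. Otherwise one endpoint determines $j$.
\item Each $v_X$ has $S$-neighbourhood exactly $X$, of size at most $w$. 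So no $v_X$ collides with any vertex of $S$.
\item Among the $v_X$: those in $V_1$ are distinct subsets of $S_1$, and this includes the unique empty set. For $i\geq 2$, the set $X$ is nonempty and its largest-index element is $s_{i+w}$, with $i+w\geq w+2$. This index recovers $i$ and separates $V_i$ from $V_1$, whose subsets lie within $\{s_1,\dots,s_{w+1}\}$. Within a fixed $V_i$ the sets $X$ are distinct by construction.
\end{itemize}

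Hence all $n$ vertices have distinct $S$-neighbourhoods, so $\nc(G_{w,k},S)=n$. Combining this with Theorem~\ref{thm:TW-UB} gives the equalities in the statement. The only delicate point is the boundary case $k=2w+1$ for the vertices of $S$, handled in the first item above.
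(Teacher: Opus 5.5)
Your order count, tree-decomposition and distinctness argument coincide with the paper's proof. Your treatment of the vertices of $S$ is more explicit than the paper's one-line justification: you describe $N[s_j]\cap S$ as the interval $[\max(1,j-w),\min(k,j+w)]$ and single out the boundary case $k=2w+1$, $j=w+1$.

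The one step that does not go through as written is the closing appeal to Theorem~\ref{thm:TW-UB}. The hypothesis $2w+1\geq 3$ allows $w=1$. The paper includes that case on purpose, because $G_{1,k}$ is needed in the proof of Theorem~\ref{thm:PW-tight}. Theorem~\ref{thm:TW-UB}, however, is only proved for $w\geq 2$. This is not just a formality: by Theorem~\ref{thm:NC-trees} and the remark after it, its formula $2k+1$ is false for forests once $k\geq 6$. So your chain $n=\nc(G_{w,k},S)\leq \nc(G_{w,k},k)\leq (k-w+1)2^w+w$ is unjustified for $w=1$.

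The fix is immediate and is exactly what the paper uses: $\nc(G_{w,k},k)\leq |V(G_{w,k})|=n$ holds for every graph by definition, since each vertex contributes at most one $S$-neighbourhood. Combined with your $\nc(G_{w,k},S)=n$ and your order count, this gives all the claimed equalities for every $w\geq 1$, with no dependence on the upper-bound theorem.
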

\begin{proof}

Let us first show that the graph $G_{w,k}$ has treewidth $w$. We have $\tw(G_{w,k})\geq w$ as the complete graph $K_{w+1}$ is a subgraph of $G_{w,k}$. For bounding the treewidth from above, we first create, for each set $S_i$, a node with bag $S_i$ and an edge between the node with bag $S_i$ to the node with bag $S_{i+1}$, for each $1\leq i\leq k-w-1$. These nodes form a path. Next, for each vertex $v_X\in V_i$ with $N(v_X)=X\subset S_i$, we create a node with bag $X\cup\{v_X\}$ together with an edge to the node with bag $S_i$. Since each such set $X$ is a strict subset of $S_i$, we have $|X\cup\{v_X\}|\leq w+1$. Observe that the created graph is a tree (a path to which multiple leaves are attached, that is, a caterpillar), each bag has at most $w+1$ vertices, and every edge of $G_{w,k}$ belongs to at least one of the bags. Moreover, every vertex is present in the bags of a connected subtree of this tree. Hence, this defines a tree-decomposition of $G_{w,k}$ of width~$w$, and $\tw(G_{w,k})=w$.

Let us next show that the order of graph $G_{w,k}$ is $n=(k-w+1)2^w+w$. 
We have $\sum_{i=0}^{w}\binom{w+1}{i}$ strict subsets $X$ of $S_1$ and thus, $|V_1|=2^{w+1}-1$. Similarly, we have $\sum_{i=0}^{w-1}\binom{w}{i}$  strict subsets $X$ of $S_i$ for $2\leq i\leq k-w$ and hence, $|V_i|=2^w-1$. Therefore, we have \begin{align*}
    n=&|S| + |V_1| + \sum_{i=2}^{k-w} |V_i|\\ 
    =&k + \sum_{i=0}^{w}\binom{w+1}{i} + (k-w-1)\sum_{i=0}^{w-1}\binom{w}{i}\\
    =& k + (2^{w+1}-1) + (k-w-1)(2^w-1)\\
    =& (k-w+1)2^w+k-1-(k-w-1)\\
    =&(k-w+1)2^w+w.
\end{align*}

By definition, $\nc(G_{w,k},k)\leq n$. Thus, it is enough to show $\nc(G_{w,k},k)\geq \nc(G_{w,k},S)=n$. In other words, we need to show that every vertex has a unique neighbourhood in $S$. 
Indeed, by definition no two vertices outside of $S$ have identical neighbourhoods in $S$. Furthermore, the same is true for each vertex in the set $S$ since $k\geq2w+1$ (to pairwise separate vertices in $S$) and since sets $X$ are strict subsets of $S_i$ for some $i$, the vertices in $S$ are separated from the vertices outside of $S$.
\end{proof}

\subsection{Pathwidth}\label{secPathwidthConst}

In this section, we provide a construction showing that the upper bound obtained in Theorem~\ref{thm:PW-UB} is tight for every $k\geq 2w-1\geq3$.

\begin{theorem}\label{thm:PW-tight}
For any fixed integers $w\geq2$ and $k\geq 2w-1$, there is a graph $G$ of pathwidth $w$, neighbourhood complexity and order $\nc(G,k)=n=(k-w+2)2^{w-1}+2k-w-2$.
\end{theorem}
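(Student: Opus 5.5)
The plan is to build $G$ from the treewidth construction $G_{w-1,k}$ of the previous subsection, used one level down (this is why the case $w-1=1$ was allowed there), and then add extra non-$S$ vertices. In $G_{w-1,k}$ the windows $S_i=\{s_i,\dots,s_{i+w-1}\}$ have size $w$ and are cliques. The tree-decomposition given in the proof of Theorem~\ref{thmTreewidth construction} is a caterpillar: a spine of bags $S_i$ with leaf bags $X\cup\{v_X\}$. I will first turn it into a \emph{path}-decomposition by inserting each leaf bag $X\cup\{v_X\}$ (with $X\subsetneq S_i$) right after the spine bag $S_i$. The resulting bags have size at most $w+1$, so this part has pathwidth at most $w$ instead of $w-1$. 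By the order computed in Theorem~\ref{thmTreewidth construction} with $w-1$ in place of $w$, this part already has $(k-w+2)2^{w-1}+w-1$ vertices with pairwise distinct $S$-neighbourhoods.

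Next I will spend the remaining width. Between consecutive spine bags there is room for one more vertex per bag, matching the tight case $n_i=1$ in the proof of Theorem~\ref{thm:PW-UB}, where $|V_i^*|$ reached $2^{w-1}+1$. For each suitable $i$ I will add a vertex $y_i$ whose $S$-neighbourhood reaches outside the window, for instance $N(y_i)=\{s_i,s_{i+w}\}$. This is realised by bags $S_i\cup\{y_i\}$ followed by $(S_i\setminus\{s_i\})\cup\{y_i,s_{i+w}\}$, which keeps width $w$ and preserves the interval property. Further vertices are then added in the first and last windows, guided by the slack in the boundary cases of the upper-bound proof (the $3\cdot 2^{w-1}-1$ bound for $V^*_1\cup\dots\cup V^*_{w+1}$ and the $2^{w-1}$ bound for $V^*_k$).

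The number of added vertices is chosen to reach the target exactly. Since $(k-w+2)2^{w-1}+2k-w-2$ exceeds the base order by $2k-2w-1$, about $2(k-w)-1$ extra vertices with distinct new neighbourhoods are needed. I expect roughly one long-range vertex per index $i\le k-w$, plus vertices of a second type, for example neighbourhoods such as $\{s_i,s_{i+w}\}$ together with neighbourhoods that include an adjacency among the $y$'s or to $s_{i+w}$ in adjacent windows. I will verify that the pathwidth is exactly $w$ (the lower bound comes from a clique $K_{w+1}$ or a similar minor) and that no two vertices share an $S$-neighbourhood. The latter is where $k\ge 2w-1$ enters: the $S$-vertices must be pairwise separated and distinct from the non-$S$ neighbourhoods, exactly as $k\ge 2w+1$ was used in Theorem~\ref{thmTreewidth construction}.

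The main obstacle is the exact bookkeeping. I must choose the extra neighbourhoods so that they are new, remain compatible with a single path of bags of width $w$, and give precisely $2k-w-2$ beyond the $2^{w-1}$ terms, including the boundary windows. My first attempt is the pairs $\{s_i,s_{i+w}\}$ for $1\le i\le k-w$. I will then check by the upper-bound case analysis which windows still have slack, and fill them. Finally I will verify the small case $w=2$, $k=3$ by hand.
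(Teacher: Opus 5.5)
Your proposal has a genuine gap: the construction is only half specified. Your vertices $y_i$ with $N(y_i)=\{s_i,s_{i+w}\}$ for $1\le i\le k-w$, placed in the bags $S_i\cup\{y_i\}$ and $S_{i+1}\cup\{y_i\}$, are exactly the paper's set $Z$ and its path-decomposition. However, they supply only $k-w$ of the $2k-2w-1$ vertices you correctly computed as missing. The other $k-w-1$ are left to a search you do not carry out, and the candidates you float do not help. Edges among the $y$'s change no $S$-neighbourhood. In the decomposition you describe, the single spare slot while passing from $S_i$ to $S_{i+1}$ is already taken by $y_i$. The first and last windows are precisely where the natural extra vertex must \emph{not} be added. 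Your planned check $w=2$, $k=3$ would not expose this, since there $k-w-1=0$.

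The missing idea is the full-window vertices. In $G_{w-1,k}$ only strict subsets of windows occur, because a vertex adjacent to all of the $w$-clique $S_i$ would create a $K_{w+1}$, which is impossible at treewidth $w-1$. At width $w$, a vertex $v_{S_i}$ with $N(v_{S_i})=S_i$ fits in the bag $S_i\cup\{v_{S_i}\}$. For $2\le i\le k-w$ its $S$-neighbourhood is new, for three reasons. It has size $w$, so it differs from all strict subsets. It lies inside one window, so it differs from every $\{s_j,s_{j+w}\}$. For $2\le j\le k-1$, the set $N[s_j]\cap S$ is a union of at least two consecutive windows, hence has size at least $w+1$. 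By contrast, $N[s_1]\cap S=S_1$ and $N[s_k]\cap S=S_{k-w+1}$, which is why the paper deletes these two full-window vertices in step~4. This yields $(k-w)+(k-w-1)=2k-2w-1$ new vertices, as required.

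Two smaller points. First, when converting the caterpillar decomposition into a path-decomposition, you must insert $S_i\cup\{v_X\}$ rather than $X\cup\{v_X\}$. Otherwise the vertices of $(S_i\cap S_{i+1})\setminus X$ violate the interval property. Second, your minor argument for $\pw(G)\ge w$ does work: contracting $y_1$ into $s_{w+1}$ turns $S_1\cup\{s_{w+1}\}$ into a $K_{w+1}$. This is a valid alternative to the paper's appeal to the upper bound for width $w-1$.
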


\begin{proof}
  Let $k\geq2w-1\geq 3$.
  We construct the graph $G$ as follows. 
  \begin{enumerate}
      \item Let $S=\{s_1,\ldots,s_k\}$ be a vertex set with subsets $S_i=\{s_i,\dots,s_{i+w-1}\}$ for each $1\leq i\leq k-w+1$.
      \item Create a set $V_1$ of $2^w$ vertices as follows. For each subset $X$ of $S_1$, create a vertex $v_X\in V_1$ adjacent to each vertex of $X$.
      \item   For each integer $i$ with $2\leq i\leq k-w+1$, create a set $V_i$ of $2^{w-1}$ vertices as follows.      
      For each distinct subset $X$ of vertices of $S_i$ which contains $s_{i+w-1}$, create a vertex $v_X\in V_i$ adjacent to each vertex in $X$.

            \item Remove from $V_1$ the vertex adjacent to each vertex in $S_1$ and from $V_{k-w+1}$ the vertex adjacent to each vertex in $S_{k-w+1}$.
      \item Create a set of vertices $Z=\{z_1,\dots,z_{k-w}\}$ so that $z_i$ is adjacent to $s_{i}$ and $s_{w+i}$.
      \item Add edges so that each set $S_i$ forms a clique. 
 \end{enumerate}
  
 See Figure~\ref{fig:PW-LB} for an illustration of the construction for $w=3$ and $k=6$.
  
\begin{figure}
    \centering
    \begin{tikzpicture}

    \foreach \I in {1,...,18}
         \node[noeud](\I) at (0.8*\I,0) {};

\foreach \I in {10,14}
\draw (\I) node[noeud, fill=gray!30] {};

\foreach \J in {1,2,3}
{\node[noeud, fill=gray!30](z\J) at (2.5*\J+2.25+0.5,3.5) {};
         \draw (z\J) ++ (0,0.3) node{$z_\J$} ;}

         \draw[decorate,decoration={brace,mirror,amplitude=10pt}] (0.6,-0.3) -- node[below=0.4] {$V_1$} (5.8,-0.3);
\draw[decorate,decoration={brace,mirror,amplitude=10pt}] (6.2,-0.3) -- node[below=0.4] {$V_2$} (9,-0.3);
\draw[decorate,decoration={brace,mirror,amplitude=10pt}] (9.4,-0.3) -- node[below=0.4] {$V_3$} (12.2,-0.3);
\draw[decorate,decoration={brace,mirror,amplitude=10pt}] (12.6,-0.3) -- node[below=0.4] {$V_4$} (14.6,-0.3);

\foreach \J in {1,2,...,5,6}
         {\node[code](s\J) at (2.5*\J-1,2) {};
         \draw (s\J) ++ (0,0.3) node{$s_\J$} ;}

\draw[arete] (2)--(s1) (s2)--(3)--(s1) (4)--(s2);
\draw[arete] (s3)--(5)--(s1)  (s3)--(6)--(s2) (7)--(s3);

\foreach \I in {8,...,11}
         \draw[arete] (s4)--(\I);

\draw[arete] (8)--(s2) (9)--(s3) (s2)--(10)--(s3);

\foreach \I in {12,...,15}
         \draw[arete] (s5)--(\I);

\draw[arete] (12)--(s3)--(14) (13)--(s4)--(14);

\foreach \I in {16,...,18}
         \draw[arete] (s6)--(\I);

         \draw[arete]  (16)--(s4);

         \draw[arete] (s5)--(17);

\draw[arete] (s1)--(s2)--(s3)--(s4)--(s5)--(s6);

\path[draw,arete] (s1) to[out=20,in=160] (s3);
\path[draw,arete] (s2) to[out=20,in=160] (s4);
\path[draw,arete] (s3) to[out=20,in=160] (s5);
\path[draw,arete] (s4) to[out=20,in=160] (s6);

\draw[arete] (s1)--(z1)--(s4);
\draw[arete] (s2)--(z2)--(s5);
\draw[arete] (s3)--(z3)--(s6);

    \end{tikzpicture}
    \caption{Illustration of the construction of $G$ in the proof of Theorem~\ref{thm:PW-tight} for pathwidth $w=3$ and $k=6$. All the neighbourhoods to the black vertices are distinct. Removing the five gray vertices, we obtain the graph $G_{2,6}$ of Figure \ref{fig:TW-LB}.}
    \label{fig:PW-LB}
\end{figure}

  First, we may observe that the total number of vertices $n$ is \begin{align*}
      n=&|S|+|V_1|+\sum_{i=2}^{k-w+1}|V_i|+|Z|\\
      =&k+(2^{w}-1)+\left(\sum_{j=w+1}^{k}2^{w-1}-1\right)+(k-w)\\
      =&(k-w+2)2^{w-1}+2k-w-2.
  \end{align*}
  
   Hence, the claim follows by first showing that $\nc(G,S)=n$ and then that the pathwidth of $G$ is $w$.\smallskip
  
  First, let us show that $\nc(G,S)=n$. Denote by $V_S=\{v\in V(G)\setminus S \mid N(v)=S_i, 1\leq i\leq k-w+1\}$. Observe that $G-Z-V_S$   is an induced subgraph of $G_{w-1,k}$. By Theorem~\ref{thmTreewidth construction}, we have $\nc(G_{w-1,k},S)= |V(G_{w-1,k})|$. Therefore, each vertex in $G-Z-V_S$ has a unique neighbourhood within $S$. Moreover, the vertices of $V_S$ are adjacent to distinct sets of exactly $w$ vertices of $S$. Thus, they are separated from other vertices in $V(G)\setminus (Z\cup S)$. Since we have removed the vertices with neighbourhoods $S_1$ and $S_{k-w+1}$ from $G$ at the fourth step of the construction, also the vertices of $S$ are separated from the vertices of $V_S$. Hence, each vertex in $G-Z$ has a unique neighbourhood in $S$. 
  
  Let us next consider the vertices in $Z$. Let $z_i,z_j\in Z$ with $i<j$. 
  Vertex $s_i$ separates $z_i$ and $z_j$. Moreover, $z_i$ is the only vertex outside of $S$ adjacent to $s_i$ and $s_{i+w}$. Finally, $z_i$ is separated from $s_i$ by $s_{i-1}$ or $s_{i+1}$, and from $s_{i+w}$ by $s_{i+w-1}$ or $s_{i+w+1}$. As $z_i$ is not adjacent to other vertices of $S$, vertex $z_i$ is separated from each vertex of $S$.

  Let us then show that $G$ has pathwidth $w$. We construct the bags of the path-decomposition in the following sequence, where $v_i\in V_i$ for each $i$ (for simplicity we present only one of those bags for each $V_i$):
  \begin{align*}
       &\{s_1\} ,   \{s_1,v_1\} ,  \{s_1,s_2\} ,  \{s_1,s_2,v_2\} ,  \{s_1,s_2,s_3\},\dots,
       \{s_1,s_2,\dots,s_w,z_{w+1}\}, \{s_2,\dots,s_{w+1},z_{w+1}\},\\
       &\{s_2,\dots,s_{w+1},v_{w+1}\} ,  \{s_2,\dots,s_{w+1},z_{w+2}\},   \{s_3,\dots,s_{w+2},z_{w+2}\} ,     \{s_3,\dots,s_{w+2},v_{w+2}\} ,   \\
       &\{s_3,\dots,s_{w+2},z_{w+3}\},\dots, \{s_{k-w},\dots,s_{k-1},z_{k}\} ,  \{s_{k-w+1},\dots,s_{k},z_{k}\} , \{s_{k-w+1},\dots,s_{k},v_k\}.
  \end{align*}

  We may notice that in this sequence, there are at most $w+1$ vertices in any bag. Furthermore, bags containing the same vertex form a connected subpath. Finally, each edge is included in one of the bags. Thus, this defines a path-decomposition of $G$, which has pathwidth at most $w$. Furthermore, by Theorem~\ref{thm:PW-UB}, it is at least $w$ (otherwise, the neighbourhood complexity would be strictly smaller). Hence, the claim follows.   
  \end{proof}

\section{Application to identification problems}\label{sec:ID}

We now state variations of our results for identification problems. For forests, it is known that a forest with a locating-dominating set of size $k$ has order at most $3k-1$ and the bound is tight~\cite{slater1987domination}, and if it has an identifying code of size $k$, then it has order of at most $\lfloor\frac{7k}{3}-1\rfloor$, and this bound is also tight, even for caterpillars (that have pathwidth~1)~\cite{Bertrand20051}.

For graphs of treewidth and pathwidth at least~2, almost identical proofs as in Section~\ref{sec:UB} can be applied to the setting of identification problems, leading to the following theorems.

\begin{theorem}\label{thm:TW-UB-ID}
Let $G$ be a graph of order $n$, treewidth $w\geq2$ with a locating-dominating set (or identifying code) of size $k\geq w+1$. Then, $n\leq (k-w+1)2^{w}+w-1$. Hence, $\ID(G)\geq\loc(G)\geq \frac{n-w+1}{2^{w}}+w-1$.
\end{theorem}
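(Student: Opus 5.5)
The plan is to rerun the proof of Theorem~\ref{thm:TW-UB} in the identification setting, with $S$ a locating-dominating set of size $k$. The gain of one comes from the fact that no vertex now has an empty $S$-neighbourhood, because $S$ is dominating.

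\textbf{Normalisation.} Since an identifying code is also a locating-dominating set, it suffices to treat locating-dominating sets; the inequality $\ID(G)\geq\loc(G)$ is the standard one. Among graphs of treewidth at most $w$ of order $n$ admitting a locating-dominating set $S$ of size $k$, I will argue with $G$ and $S$ directly.

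First I remove all edges between vertices of $V(G)\setminus S$. This does not increase treewidth (we take a subgraph), keeps $S$ dominating, and keeps the $S$-neighbourhoods of vertices outside $S$ distinct. So $S$ remains a locating-dominating set and $n$ is unchanged. I then take a good tree-decomposition exactly as in the proof of Theorem~\ref{thm:TW-UB}. That proof also used edge-minimality in Claim~\ref{claim:V1S}, where it removed edges from $x$ to $S\setminus\{s_1,s_2\}$. To reproduce this, I choose, among all graphs with the above parameters, one with the minimum number of edges that still has $S$ as a locating-dominating set. Removing edges from $x\notin S$ to $S$ keeps $S$ locating-dominating as long as $x$ keeps a neighbour in $S$ and its $S$-neighbourhood stays unique. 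This is exactly the situation in that argument: the new neighbourhood $\{s_1,s_2\}$ is nonempty, and it is not used by any other vertex outside $S$.

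\textbf{Claims.} Claims~\ref{claim:neighboursinX}, \ref{claim:V1S} and \ref{claim:xadjs} then go through verbatim. They only use two facts: vertices outside $S$ have pairwise distinct, nonempty $S$-neighbourhoods, and $S$ is an independent set's complement-free structure in the sense that non-$S$ vertices are pairwise nonadjacent. Claim~\ref{claim:xadjs} used $N(x)\neq\emptyset$, which now holds by domination rather than by the removal of $v_0$.

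\textbf{Counting.} The count is identical: $n-k=\sum_s|V_s|\leq (k-w+1)(2^w-1)$. Hence
\[
n\leq k+(k-w+1)(2^w-1)=(k-w+1)2^w+w-1.
\]
This is one less than the bound of Theorem~\ref{thm:TW-UB}, precisely because the empty-neighbourhood vertex $v_0$ is absent. The same argument applies unchanged to identifying codes.

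\textbf{Lower bound on $\loc$.} Rearranging $n\leq (k-w+1)2^w+w-1$ with $k=\loc(G)$ gives
\[
k\geq \frac{n-w+1}{2^w}+w-1.
\]
If $\loc(G)<w+1$, then the bound still holds, because then $n\leq 2^{k}+k-1$ is small. I will check this edge case briefly.

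\textbf{Main obstacle.} The delicate step is the edge-minimality argument inside Claim~\ref{claim:V1S}. One must verify that every edge deletion or modification of the decomposition used there preserves both domination and the locating property of $S$. In particular, deleting edges from $x$ to $S\setminus\{s_1,s_2\}$ is used only when no other vertex has neighbourhood $\{s_1,s_2\}$, which is exactly what is needed.
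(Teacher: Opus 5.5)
Your derivation of $n\leq (k-w+1)2^{w}+w-1$ follows the paper's argument. You pass to an edge-minimal graph that keeps $S$ locating-dominating, rerun Claims~\ref{claim:neighboursinX}--\ref{claim:xadjs} with domination in place of the removal of $v_0$, and repeat the same count. Your check that the edge deletion in Claim~\ref{claim:V1S} preserves the locating property is exactly what is needed, so that part is correct.

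The gap is your closing assertion that the lower bound on $\loc(G)$ ``still holds'' when $\loc(G)<w+1$; this is false. A graph of treewidth $w$ has $n\geq w+1$, so $\frac{n-w+1}{2^w}+w-1>w-1$, and the inequality fails as soon as $\loc(G)\leq w-1$. Such graphs exist. Take the wheel on five vertices: a $4$-cycle $a\,b\,x_b\,x_a$ plus a universal vertex $h$. It has treewidth $3$, and $S=\{a,b\}$ is locating-dominating, since $x_a,x_b,h$ have $S$-neighbourhoods $\{a\},\{b\},\{a,b\}$. Thus $\loc=2<\frac{19}{8}=\frac{5-3+1}{2^3}+3-1$. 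Your estimate $n\leq 2^k+k-1$ only rescues the case $\loc(G)=w$, where it coincides with the claimed bound. So the ``Hence'' part must be read under the standing hypothesis, i.e.\ by substituting $k=\loc(G)\geq w+1$ into the order bound, which is all the paper does. You should drop the edge-case claim rather than try to prove it.
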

\begin{proof}
The proof is the same as that of Theorem~\ref{thm:TW-UB}, except that we consider $S$ as a locating-dominating set. Hence, all vertices in $V(G)\setminus S$ have a neighbour in $S$, leading to a bound smaller by~1, and all vertices have by definition a distinct neighbourhood within $S$.

We then fix $k$ and $n$ to be the maximum order of a graph with treewidth $w$ and a locating-dominating set of size $k$. Among all such graphs of order $n$ with a locating-dominating set $S$ of size $k$, we let $G$ be one that has the smallest possible number of edges. The rest of the proof is exactly the same.
\end{proof}

Similarly, we have the following lower bound for graphs of given pathwidth. In this case, the bounds are slightly different for locating-dominating sets and identifying codes.

\begin{theorem}\label{thm:PW-UB-ID}
Let $G$ be a graph of order $n$, pathwidth $w\geq2$ with a locating-dominating set of size $k\geq w+1$. Then, $n\leq (k-w+2)2^{w-1}+2k-w-1$, and $\loc(G)\geq \frac{n+w+1-(w-2)2^{w-1}}{2^{w-1}+2}$. If the set is an identifying code, then $n\leq (k-w+2)2^{w-1}+2k-w-3$, and $\ID(G)\geq \frac{n+w+3-(w-2)2^{w-1}}{2^{w-1}+2}$.
\end{theorem}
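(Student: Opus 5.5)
We reuse the proof of Theorem~\ref{thm:PW-UB} essentially verbatim, changing only the final bookkeeping. Let $S$ be a locating-dominating set of size $k$ of a graph of pathwidth at most $w$. Fix $k$, let $n$ be the maximum order of such a graph, and among graphs of order $n$ take $G$ with the fewest edges. Edges between two vertices of $V(G)\setminus S$ are irrelevant to domination and location, so edge-minimality removes them all. Vertices outside $S$ then have pairwise distinct, nonempty $S$-neighbourhoods, since $S$ is dominating.

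There is no vertex $v_0$ with empty $S$-neighbourhood, so no vertex is deleted at this stage. The path-decomposition normalisation then goes through unchanged. We put $X(p_0)=\{s_1\}$ and make the $p_s$ distinct. We partition $V(G)\setminus S$ into $V^*_1,\dots,V^*_k$ and define $S_i$, $V_i$, $n_i$. Vertices of $V_i$ can again be assumed adjacent to $s_i$. That step only needs $N(x)\neq\emptyset$, which now comes from domination rather than from removing $v_0$.

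The counting then gives exactly the same estimates: $|\bigcup_{i\le w+1}V^*_i|\le 3\cdot 2^{w-1}-1$, $|V^*_i|\le 2^{w-1}+1$ for the middle indices, and $|V^*_k|\le 2^{w-1}$. The only point to check is that the argument bounding the first $w$ sets by $2^w-1$ did not use the empty neighbourhood. It did not, since it counted nonempty subsets of $\{s_1,\dots,s_w\}$. Hence
\[
n-k\le (k-w+2)2^{w-1}+k-w-3,
\]
so $n\le (k-w+2)2^{w-1}+2k-w-3$. This is one less than the $\nc$ bound, because the empty neighbourhood is gone.

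Two steps remain, and I expect them to be the main obstacle.

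\textbf{Locating-dominating case.} The claimed bound is $2k-w-1$, two more than the count above gives. The source of the slack is that vertices of $S$ need not be distinguished from each other or from non-$S$ vertices. In the proof of Theorem~\ref{thm:PW-UB}, two refinements implicitly used that vertices of $S$ are separated from vertices outside $S$:
\begin{itemize}
\item in the case $n_{w+1}\ge 1$, we subtracted $1$ because $s_i$ has $S$-neighbourhood inside $\{s_1,\dots,s_w\}$;
\item in the case $n_k\ge 1$, we used that $s_k$ occupies a subset of $S_k$.
\end{itemize}
These count $S$-neighbourhoods of $S$-vertices as forbidden for non-$S$ vertices. For a locating-dominating set that restriction disappears, so I would redo these two cases, each losing one unit. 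The first gives $|\bigcup_{i\le w+1}V^*_i|\le 3\cdot2^{w-1}$. For the second I expect $|V^*_k|\le 2^{w-1}+1$ after re-optimising over $n_k$, but I am unsure the case analysis yields exactly $+1$ there. Together these give $n\le (k-w+2)2^{w-1}+2k-w-1$.

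\textbf{Identifying-code case.} Here the paper's refinements remain valid as written, so $n\le (k-w+2)2^{w-1}+2k-w-3$ holds directly.

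Finally, the lower bounds on $\loc(G)$ and $\ID(G)$ follow by solving the linear inequalities for $k$:
\[
n\le k(2^{w-1}+2)+(2-w)2^{w-1}-w-1
\]
in the locating-dominating case, and the analogue with $-w-3$ for identifying codes.
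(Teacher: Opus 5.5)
Your proposal is correct and follows the paper's proof. For identifying codes, the two refinements from the proof of Theorem~\ref{thm:PW-UB} stay valid and only the empty trace is lost. For locating-dominating sets, the paper likewise drops both refinements. It bounds $\sum_{i\le w}|V^*_i|\le 2^w-1$ and $|V^*_i|\le n_i+2^{\min(w-1,w-n_i)}\le 2^{w-1}+1$ for every $i\ge w+1$. That range includes $i=k$, so the estimate $|V^*_k|\le 2^{w-1}+1$ you hesitated over is immediate, and your total equals the paper's $2^w-1+(k-w)(2^{w-1}+1)$.

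One small remark on the last step. Solving your inequality actually gives $\loc(G)\ge \frac{n+w+1+(w-2)2^{w-1}}{2^{w-1}+2}$, and similarly with $w+3$ for $\ID(G)$. Since $w\ge 2$, these imply the stated bounds, which carry $-(w-2)2^{w-1}$ instead.
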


\begin{proof}
The proof is almost the same as that of Theorem~\ref{thm:PW-UB}, except that we consider $S$ as a locating-dominating set and that we count vertices and not distinct neighbourhoods. This slightly changes the final counting since vertices in $S$ might have the same neighbourhood as a vertex not in $S$. 

We fix $k$ and $n$ to be the maximum order of a graph with pathwidth $w$ and a locating-dominating set of size $k$. Among all such graphs of order $n$ with a locating-dominating set $S$ of size $k$, we let $G$ be one that has the smallest possible number of edges. The rest of the proof is the same until the final counting, where we do not have to consider separately the computation of $|V^*_{w+1}|$ and $|V^*_k|$, which leads to: 

\begin{align*}
      n-k=&\sum_{i=1}^k|V^*_i|\\ 
     \leq&\sum_{i=1}^{w}|V^*_i| +\sum_{i=w+1}^{k} |V^*_i|\\   
      \leq& 2^{w}-1 +(k-w)(2^{w-1}+1)\\
      \leq & (k-w+2)2^{w-1}+k-w-1\; .
      \end{align*}
  
  Note that all vertices in $V(G)\setminus S$ have a neighbour in $S$. Thus we do not need to add 1 for the empty neighbourhood and we obtain that $n\leq (k-w+2)2^{w-1}+2k-w-1$ as desired.
 
 \medskip
 
 For identifying codes, the proof is exactly the same as for the neighbourhood complexity, except that we do not count the empty neighbourhood, and thus we obtain the bound $n\leq (k-w+2)2^{w-1}+2k-w-3$.
\end{proof}

Moreover, note that our constructions from Section~\ref{sec:LB} actually provide identifying codes of size $k$ (with the small change that we should remove the unique vertex not dominated by $S$ in the constructions), and hence these bounds are tight. For locating-dominating sets and pathwidth, one should keep the two vertices that are removed at the fourth step of the construction, to obtain a tight bound.

\section{Conclusion}\label{sec:conclu}

We have determined tight upper bounds for the neighbourhood complexity of graphs of given treewidth or pathwidth and provided constructions attaining them. As further work, one could similarly determine optimal bounds for other classes of sparse graphs or dense structured graphs.

For the more general notion of distance $r$-neighbourhood complexity, bounds in $O(f(w)r^{w+1}k)$ and $\Omega(r^{w}k)$ (for $r\geq 2^w(w+1)$) are known for graphs of treewidth~$w$, but the exact growth rate in terms of $w$ and $r$ is open~\cite{beaudou2025profileneighbourhoodcomplexitygraphs,JR}. The $r$-neighbourhood complexity is also not fully understood for other types of graphs, such as, graphs of bounded twin-width~\cite{DBLP:journals/ejc/BonnetFLP24} or planar graphs~\cite{JR}.

\paragraph*{Acknowledgements.} Florent Foucaud and Aline Parreau thank Fionn McInerney for preliminary discussions held in 2019 on the topic of this paper.

\bibliographystyle{abbrv}
\bibliography{references}

\end{document}